\documentclass[12pt,american]{article}
\usepackage{comment}
\usepackage{lmodern}
\usepackage{lmodern}
\usepackage[T1]{fontenc}
\usepackage[latin9]{inputenc}
\usepackage{geometry}
\geometry{verbose,tmargin=1in,bmargin=1in,lmargin=1in,rmargin=1in}
\usepackage[active]{srcltx}
\usepackage{babel}
\usepackage{amsmath}
\usepackage{amsthm}
\usepackage{amssymb}
\usepackage{setspace}
\usepackage[authoryear]{natbib}
\usepackage{booktabs} 
\usepackage{caption}
\usepackage{subcaption}
\usepackage{array}
\usepackage{makecell}
\onehalfspacing
\usepackage[unicode=true,pdfusetitle,
 bookmarks=true,bookmarksnumbered=false,bookmarksopen=false,
 breaklinks=false,pdfborder={0 0 1},backref=false,colorlinks=false]
 {hyperref}
\hypersetup{
 colorlinks,linkcolor={blue!70!black},citecolor={blue!50!black},urlcolor={blue!60!black}}

\makeatletter
\theoremstyle{plain}
\newtheorem{assumption}{\protect\assumptionname}
\theoremstyle{plain}
\newtheorem{thm}{\protect\theoremname}
\theoremstyle{plain}
\newtheorem{Remark}{Remark}
\theoremstyle{remark}

\usepackage{babel}
\usepackage{float}
\usepackage{mathrsfs}
\usepackage{breakurl}
\usepackage{bbm}
\usepackage{tikz}
\usepackage{centernot}
\usepackage{enumitem}
\allowdisplaybreaks

\newcommand{\customlabel}[2]{%
   \protected@write \@auxout {}{\string \newlabel {#1}{{#2}{\thepage}{#2}{#1}{}} }%
   \hypertarget{#1}{}
}

\providecommand{\remarkname}{Remark}

\providecommand{\assumptionname}{Assumption}

\providecommand{\theoremname}{Theorem}

\makeatother

\providecommand{\assumptionname}{Assumption}
\providecommand{\theoremname}{Theorem}

\begin{document}
\global\long\def\a{\alpha}%
\global\long\def\b{\beta}%
\global\long\def\g{\gamma}%
\global\long\def\d{\delta}%
\global\long\def\e{\epsilon}%
\global\long\def\l{\lambda}%
\global\long\def\t{\theta}%
\global\long\def\o{\omega}%
\global\long\def\s{\sigma}%
\global\long\def\G{\Gamma}%
\global\long\def\D{\Delta}%
\global\long\def\L{\Lambda}%
\global\long\def\T{\Theta}%
\global\long\def\O{\Omega}%
\global\long\def\R{\mathbb{R}}%
\global\long\def\N{\mathbb{N}}%
\global\long\def\Q{\mathbb{Q}}%
\global\long\def\I{\mathbb{I}}%
\global\long\def\P{\mathbb{P}}%
\global\long\def\E{\mathbb{E}}%
\global\long\def\B{\mathbb{\mathbb{B}}}%
\global\long\def\S{\mathbb{\mathbb{S}}}%
\global\long\def\V{\mathbb{\mathbb{V}}\text{ar}}%
\global\long\def\GG{\mathbb{G}}%
\global\long\def\TT{\mathbb{T}}%
\global\long\def\X{{\bf X}}%
\global\long\def\cX{\mathscr{X}}%
\global\long\def\cY{\mathscr{Y}}%
\global\long\def\cA{\mathscr{A}}%
\global\long\def\cB{\mathscr{B}}%
\global\long\def\cF{\mathscr{F}}%
\global\long\def\cM{\mathscr{M}}%
\global\long\def\cN{\mathcal{N}}%
\global\long\def\cG{\mathcal{G}}%
\global\long\def\cC{\mathcal{C}}%
\global\long\def\sp{\,}%
\global\long\def\es{\emptyset}%
\global\long\def\mc#1{\mathscr{#1}}%
\global\long\def\ind{\mathbf{\mathbbm1}}%
\global\long\def\indep{\perp}%
\global\long\def\any{\forall}%
\global\long\def\ex{\exists}%
\global\long\def\p{\partial}%
\global\long\def\cd{\cdot}%
\global\long\def\Dif{\nabla}%
\global\long\def\imp{\Rightarrow}%
\global\long\def\iff{\Leftrightarrow}%
\global\long\def\up{\uparrow}%
\global\long\def\down{\downarrow}%
\global\long\def\arrow{\rightarrow}%
\global\long\def\rlarrow{\leftrightarrow}%
\global\long\def\lrarrow{\leftrightarrow}%
\global\long\def\abs#1{\left|#1\right|}%
\global\long\def\norm#1{\left\Vert #1\right\Vert }%
\global\long\def\rest#1{\left.#1\right|}%
\global\long\def\bracket#1#2{\left\langle #1\middle\vert#2\right\rangle }%
\global\long\def\sandvich#1#2#3{\left\langle #1\middle\vert#2\middle\vert#3\right\rangle }%
\global\long\def\turd#1{\frac{#1}{3}}%
\global\long\def\ellipsis{\textellipsis}%
\global\long\def\sand#1{\left\lceil #1\right\vert }%
\global\long\def\wich#1{\left\vert #1\right\rfloor }%
\global\long\def\sandwich#1#2#3{\left\lceil #1\middle\vert#2\middle\vert#3\right\rfloor }%
\global\long\def\abs#1{\left|#1\right|}%
\global\long\def\norm#1{\left\Vert #1\right\Vert }%
\global\long\def\rest#1{\left.#1\right|}%
\global\long\def\inprod#1{\left\langle #1\right\rangle }%
\global\long\def\ol#1{\overline{#1}}%
\global\long\def\ul#1{\underline{#1}}%
\global\long\def\td#1{\tilde{#1}}%
\global\long\def\bs#1{\boldsymbol{#1}}%
\global\long\def\upto{\nearrow}%
\global\long\def\downto{\searrow}%
\global\long\def\pto{\overset{p}{\longrightarrow}}%
\global\long\def\dto{\overset{d}{\longrightarrow}}%
\global\long\def\asto{\overset{a.s.}{\longrightarrow}}%
\setlength{\abovedisplayskip}{6pt} \setlength{\belowdisplayskip}{6pt} 
\title{Inference on Welfare and Value Functionals \\
under Optimal Treatment Assignment\thanks{We thank Xu Cheng, Yanqin Fan, Sukjin Han, Patrick Kline, Soonwoo Kwon, Oliver Linton, Liyang Sun, Petra Todd, and conference participants at 2025 World Congress of the Econometric Society, 2025 California Econometrics Conference, 2025 Cowles Conference on Econometrics Celebrating Don Andrews for helpful comments and suggestions.}}
\author{Xiaohong Chen\thanks{Department of Economics and Cowles Foundation for Research in Economics,
Yale University, 
USA, xiaohong.chen@yale.edu. Chen thanks Cowles Foundation for research support.}, Zhenxiao Chen\thanks{Department of Economics, University of Pennsylvania,
USA, zxchen@upenn.edu.}, and Wayne Yuan Gao\thanks{Department of Economics, University of Pennsylvania, 
USA, waynegao@upenn.edu.}\\
 \textbf{~}}
\maketitle
\begin{abstract}
\noindent We provide theoretical results for the estimation and inference of a class of welfare and value functionals of the nonparametric conditional average treatment effect (CATE) function under optimal treatment assignment, i.e., treatment is assigned to an observed type if and only if its CATE is nonnegative. For the optimal welfare functional defined as the average value of CATE on the subpopulation with nonnegative CATE, we establish the $\sqrt{n}$ asymptotic normality of the semiparametric plug-in estimators and provide an analytical asymptotic variance formula. For more general value functionals, we show that the plug-in estimators are typically asymptotically normal at the 1-dimensional nonparametric estimation rate, and we provide a consistent variance estimator based on the sieve Riesz representer, as well as a proposed computational procedure for numerical integration on submanifolds. The key reason underlying the different convergence rates for the welfare functional versus the general value functional lies in that, on the boundary subpopulation for whom CATE is zero, the integrand vanishes for the welfare functional but does not for general value functionals. We demonstrate in Monte Carlo simulations the good finite-sample performance of our estimation and inference procedures, and conduct an empirical application of our methods on the effectiveness of job training programs on earnings using the JTPA data set. \\
 \textbf{~}\\
 \textbf{Keywords:}  optimal treatment assignment, conditional average treatment effect, semiparametric estimation and inference, regular and irregular functionals
\end{abstract}

\section{\label{sec:Intro}Introduction }

In this paper, we study the estimation and inference on welfare and
value functionals of a given treatment under optimal (``first-best'')
treatment assignment.

Let $D_{i}\in\left\{ 0,1\right\} $ denote a certain binary treatment
for subject $i$, $\left(Y_{i}\left(0\right),Y_{i}\left(1\right)\right)$
denote the potential outcomes of interest, and $Y_{i}:=Y_{i}\left(D_{i}\right)\in \R$
denote the observed outcome. Let $X_{i}\in \R^d$ denote subject $i$'s observable
characteristics of $i$, which is distributed with density $f_0$. We suppose that researchers have access to
a random sample of training data $\left\{(D_{i},Y_{i},X_{i})\right\}_{i=1}^{n}$. 

Under the standard conditional unconfoundedness assumption $\rest{\left(Y_{i}\left(0\right),Y_{i}\left(1\right)\right)\indep D_{i}}X_{i}$
and the overlap condition $p_0\left(x\right):=\E\left[\rest{D_{i}}X_{i}=x\right]\in\left(0,1\right)$,
the conditional average treatment effect (CATE) defined by
\[
\text{CATE}\left(x\right):=\E\left[\rest{Y_{i}\left(1\right)-Y_{i}\left(0\right)}X_{i}=x\right]
\]
is identified from data by
\begin{align}
\text{CATE}\left(x\right)\equiv h_{0}\left(x\right) & :=\mu_{0}\left(x,1\right)-\mu_{0}\left(x,0\right),\label{eq:h0}
\end{align}
where $h_0:\R^d \mapsto \R$, and
\begin{equation}
\mu_{0}\left(x,d\right):=\E\left[\rest{Y_{i}}X_{i}=x,D_{i}=d\right]\label{eq:mu_0}
\end{equation}
is the nonparametric regression function of the outcome $Y_{i}$ on
$X_{i}$ and $D_{i}$. We maintain the conditional unconfoundedness
assumption and the overlap condition, and will thereafter simply refer
to $h_{0}$ as the CATE function. We further assume that $h_{0}$ belongs to a Holder class of functions with smoothness $s>1$.

We consider a standard scenario where policymakers can assign treatments
based on covariates, and focus on the following two core types of
welfare and value parameters. The first type is the maximized welfare
of the target population under optimal treatment assignment:
\begin{align}
W\left(h_{0}\right) & :=\int\left[h_{0}\left(x\right)\right]_{+}f\left(x\right)dx\label{eq:W_h0}
\end{align}
where $f$ is the marginal density of $x$ in the target population
and $\left[t\right]_{+}:=\max\left(t,0\right)$ is the rectified linear
unit (ReLU) function. $W\left(h_{0}\right)$ averages the CATE over
the population under the ``first-best'' treatment assignment rule:
``treat type $x$ if and only if $\text{CATE}\left(x\right)\geq0$,''
and is thus often referred to as the welfare under optimal treatment
assignment. We will thereafter refer to $W\left(h_{0}\right)$ as
the welfare functional.

Alternatively, one may also be interested in evaluating the average
of a value other than CATE over the population under optimal treatment
assignment:
\begin{align}
V\left(h_{0}\right) & :=\int\ind\left\{ h_{0}\left(x\right)\geq0\right\} v_0\left(x\right)f\left(x\right)dx\label{eq:V_h0}
\end{align}
where $v_0:\R^d \mapsto \R$ is a user-defined function that may be a utility function,
a cost function, or any economically meaningful function of the observed
covariate $x$. For example, setting $v_0\left(x\right)=a'x$ endows $V\left(h_{0}\right)$
with the interpretation as certain aggregate characteristics of the treated
population under optimal treatment assignment, and setting $v_0\equiv1$
implies that $V\left(h_{0}\right)\equiv\P_{f}\left(h_{0}\left(X_{i}\right)\geq0\right)$
becomes the share of the target population to be treated (i.e. with
nonnegative CATE).

In the formulation of $W\left(h_{0}\right)$ and $V\left(h_{0}\right)$
above, we take the density $f\left(x\right)$ to be known. This is
in itself relevant in settings where the covariate density of the
target population is configured or known/estimated from other sources
than the training sample used to estimated CATE. For example, CATE
may be estimated from a smaller pilot program, while the policymakers
are contemplating to implement the policy on a statewide or nation-wide
basis with a much larger population. That said, in this paper we also
consider an important case where $f$ is unknown and set to be the
covariate density in the underlying population of the training sample.
In this case, $f$ does not need to be estimated, as the integral
with respect to $f$ can be naturally approximated via sample average
in the training sample. This corresponds more closely to the ``empirical
welfare'' as considered in \citet*{kitagawa2018should}. We
provide results for this setting as well. 

~

In this paper, we establish inference results for the welfare and value functionals $W\left(h_{0}\right)$ and $V\left(h_{0}\right)$
with nonparametric estimated CATE $h_0$. The results can be summarized informally as follows.

For the welfare functional $W\left(h_{0}\right)$, we establish semiparametric
plug-in estimators of the welfare functional are asymptotically normal
at the parametric $\sqrt{n}$ rate and derive closed-form asymptotic
variance formulas, along with consistent asymptotic variance estimators.
The key insight of the $\sqrt{n}$ rate is geometric: by the definition
of the welfare functional, the integrand $h_{0}$$\left(x\right)$
vanishes on the boundary of the integration $\left\{ x:h_{0}\left(x\right)=0\right\} $,
neutralizing the non-smoothness of the indicator function $\ind\left\{ h_{0}\left(x\right)\geq0\right\} $. 

In contrast, for the value functional $V\left(h_{0}\right)$ with
general weight $v_{0}$ that does not vanish on the boundary $\left\{ x:h_{0}\left(x\right)=0\right\} $,
we show that the rate of convergence is slower than $\sqrt{n}$, and
is instead given by the 1-dimensional nonparametric regression rate
$n^{-\frac{s}{2s+1}}$ under appropriate conditions. We establish
asymptotic normality of the semiparametric plug-in estimator under
this irregular convergence rate, and provide a consistent variance
estimator based on the sieve Riesz representer. In particular, the
consistent variance estimator features a Hausdorff integral on the
boundary submanifold $\left\{ x:h_{0}\left(x\right)=0\right\} $,
for which we provide a numerical integration and differentiation procedure for the computation of submanifold integrals. 

We conduct an array of Monte Carlo experiments to document the good
finite-sample accuracy of our theoretical inferential results. We
show that the proposed standard error estimators perform well in finite
sample, and the corresponding confidence intervals based on the asymptotic
normality result and the standard error estimators have coverage probabilities
close to their nominal levels. These findings hold not only for the
$\sqrt{n}$-estimable welfare functional, but also for the value functional,
which is estimated at slower-than-$\sqrt{n}$ rate with standard errors
computed through numerical integration and differentiation.

We also apply our results to empirical data from the Job Training
Partnership Act (JTPA) data set. Following \citet*{kitagawa2018should},
we take 30-month post-program earning as the outcome variable and
consider two covariates: pre-program earning and education. We then
provide empirical estimates and confidence intervals for two parameters:
the welfare under first-best treatment assignment, which is $\sqrt{n}$
estimable, and the share of population to be treated under first-best
treatment assignment, which is not $\sqrt{n}$-estimable. As in \citet*{kitagawa2018should},
we also consider two different scenarios: one with the cost of the
treatment incorporated, and one without. These parameters have also
been estimated in \citet*{kitagawa2018should} under the label of
``nonparametric plug-in rule'' using kernel first-stages, but \citet*{kitagawa2018should}
only provide point estimates with no confidence intervals for them.
We use sieve (B-spline) first-stage nonparametric estimators and find
similar results to those in \citet*{kitagawa2018should}, and further
provide informative confidence intervals for both the welfare and
the share parameters. 

\subsubsection*{Closely Related Literature}

This work is a companion paper to \citet*{chen2025semiparametric},
and directly applies the general theoretical results there to handle
differentiation with respect to region of integration and semiparametric
estimation of integrals over submanifolds. Specifically, this work
focuses on the important context of treatment assignment
problems, and deal with two features that are not discussed
in \citet*{chen2025semiparametric}. First, CATE is defined as the
difference of two nonparametric regression functions between the treated and untreated
subpopulation (or the difference of two point evaluations of a nonparametric
function with treatment status defined as an argument too), and is
often estimated as the difference of two first-stage nonparametric
estimators. 
Second, in treatment assignment problems, researchers may face two
different scenario in terms of the distribution of the covariates:
sometimes, this distribution can be treated as known and may be different
from the covariate distribution in the experimental/observational
population from which CATE is estimated; in other times, one may want
to treat the covariate distribution as unknown and the same as the
experimental/observational population, and uses sample averages to
automatically incorporate the covariate distribution. In this paper,
we takes into account these special structures of the problem, and establish the inference results by providing lower-level sufficient conditions to the general theory in \citet*{chen2025semiparametric}.\footnote{Two concurrent papers by \cite*{cattaneo2025dist,cattaneo2025loc} also feature submanifold integrals, but focuses on 1-dimensional cases that arise from the specific context of boundary discontinuity designs. Our current paper also differs significantly from \cite*{cattaneo2025dist,cattaneo2025loc}, who focus on boundary discontinuity designs and boundary treatment effects, which are very different objects from the welfare and value functionals considered here under first-best treatment assignments. Consequently, the submanifolds (boundaries) in their settings are given or known based on the locations or a distance function, while in our current project the boundary submanifold is defined by the unknown and estimated CATE function.}

Our paper makes new contributions to the literature on
estimation and inference on a general value functional of a policy under optimal
treatment assignment. To the best of our knowleadge, all the existing work on limiting distributions of functionals of a policy under optimal treatment assignments considered $\sqrt{n}$-normality only. Our paper is the first to establish slower-than-root-n limiting distribution and inference results for irregular value functionals of a policy under optimal treatment assignment. Previously,  \citet*{bhattacharya2012inferring}
establishes $\sqrt{n}$-normality of the optimal welfare value
under budget constraint, using Nadaraya-Waston kernel estimator for
first-stage estimation of CATE. The important work of \citet*{kitagawa2018should}
focuses on a related but slightly different topic: empirical welfare
maximization within a constrained class of policy rules. That said,
\citet*{kitagawa2018should} also considers and reports empirical
estimates on the nonparametric plug-in rule, which can be interpreted
as a plug-in estimator of the first-best welfare, though there were
no theoretical results or confidence intervals for this estimate.
The recent papers by \citet*{park2024debiased}
and \citet*{whitehouse2025inference} use soft-max
functions to smooth the max function, and apply the debiased machine
learning approach to establish asymptotic normality and provide inferential
results, and \citet*{whitehouse2025inference} establishes $\sqrt{n}$ asymptotic normality of their soft-max welfare functional. \citet*{feng2024statistical}
analyzes binary treatment assignment under constraints and the asymptotic
property of the welfare of the policy under optimal cutoff choice, and establishes root-$n$ asymptotic normality of the functionals. Our paper complements these existing work in several ways.
First, we clarify that the $\sqrt{n}$-estimability of the welfare
functional is due to the fact that the integrand (CATE) by construction
vanishes on the boundary of the optimally treated population $\left\{ x:\text{CATE}\left(x\right)=0\right\} $,
which is specific to the welfare functional but generally not satisfied
for other types of value functionals. Second, we demonstrate that
the $\sqrt{n}$-estimability of the welfare functional can be attained without the use of smoothing/soft-max
functions.
Third and most importantly, our results extend well beyond the welfare functional, and
cover generic value functionals that may be slower than $\sqrt{n}$-estimable.

~

The rest of the paper is organized as follows. Section \ref{sec:RootN}
lays out the main model setup, and provides a conceptual explanation of why the welfare functional $W\left(h_{0}\right)$
can be $\sqrt{n}$-estimable while the value functional $V\left(h_{0}\right)$
is not in general. Section \ref{sec:WelfareFunc} then establishes
the inference results for the welfare functional $W\left(h_{0}\right)$,
while Section \ref{sec:ValueFunc} provides corresponding results
for the value function $V\left(h_{0}\right)$. We report numerical
results from Monte Carlo simulations in Section \ref{sec:Sim}, and conduct an empirical illustration in Section \ref{sec:Emp}. Proofs of theoretical results in the main text are available in Appendix \ref{app:Proof}.


\section{Model Setup and Functional Derivatives}\label{sec:RootN}

We first introduce the standard treatment effect model, the general welfare functional of interest and the maintained assumptions in Subsection \ref{sec:model}

\subsection{The Model and the Parameters of Interest}\label{sec:model}

We first state the maintained assumption in the paper. Let $\hat{h}\left(x\right):=\hat{\mu}\left(x,1\right)-\hat{\mu}\left(x,0\right)$ be a nonparametric estimator of the CATE function, in which
$\hat{\mu}\left(x,d\right)$ is a first-stage estimator of the
nonparametric regression model 
\begin{equation}
Y_{i}=\mu_{0}\left(X_{i},D_{i}\right)+\e_{i},\quad\E\left[\rest{\e_{i}}X_{i},D_{i}\right]=0,\quad\E\left[\rest{\e^2_{i}}X_{i},D_{i}\right]<\infty.\label{eq:NP_reg_mu}
\end{equation}
We impose the following basic assumptions in this paper.
\begin{assumption}[Model]\label{assu:main}  The training data and the model satisfy:
\begin{itemize}
\item[(a)] Training sample: the training data $\left\{(Y_{i},D_i,X_{i})\right\}_{i=1}^{n}$ is a random sample drawn from $(Y,D,X)\in \R\times \{0,1\}\times {\cal X}$ satisfying Model \eqref{eq:NP_reg_mu}, where  ${\cal X}$ is a bounded rectangular set (say $[0,1]^d$) in $\R^d$, and $X_i$ has its true unknown marginal density $f_0$ supported on ${\cal X}$.
\item[(b)] Overlap: $0< p_0(x):=\E[D_i|X_i = x]<1$.
\item[(c)] Smoothness of Regression Function: For $d\in\{0,1\}$, $\mu_{0}\left(\cdot,d\right)\in \Lambda^s({\cal X})$ with $s>1$.
\end{itemize}
\end{assumption}

We first provide a heuristic overview of our theoretical analysis,
and explain the key intuition why the welfare functional $W\left(h_{0}\right)$
may be $\sqrt{n}$-estimable (i.e. $W$ is a regular functionial)
while $V\left(h_{0}\right)$ is generally not (i.e., $V$ is a irregular
functional). 

We first introduce a more general value functional $\Phi \left(h_{0}\right)$ that nests $W\left(h_{0}\right)$ and  $V\left(h_{0}\right)$ as special cases:
\begin{equation}\label{Phi}
\Phi\left(h_{0}\right):=\int\ind\left\{ h_{0}\left(x\right)\geq0\right\} \phi\left(h_{0}(x),x\right)f\left(x\right)dx,
\end{equation}
where $\phi:\R\times \R^d \to\R$ is a known measurable mapping. We note that 
\begin{itemize}
\item $\Phi \left(h_{0}\right)=W\left(h_{0}\right)$ when 
$ \phi\left(h_{0}(x),x\right) =h_{0}(x)$;
\item $\Phi \left(h_{0}\right)=V\left(h_{0}\right)$ when $ \phi\left(h_{0}(x),x\right)=v_0(x)$ 
with $\partial_1 \phi\left(h_{0}(x),x\right)=0$. 
\end{itemize}

\begin{assumption}[Functional]\label{assu:main-functional}  The functional $\Phi$ satisfies
\begin{itemize}
\item[(a)] $\phi:\R\times \R^d \to\R$ is continuously differentiable with respect to its first argument.
\item[(b)] Target Density: The target density $f$ of $X$ is absolutely continuous w.r.t. $f_{0}$ with uniformly bounded Radon-Nikodym derivative $\l:=f/f_{0}$.
\item[(c)] Regular Level Set: $h_0 (\cdot):=\mu_{0}\left(\cdot,1\right)-\mu_{0}\left(\cdot,0\right)$ satisfies $\norm{\Dif_{x} h_{0}\left(x\right)}\geq\ul{c}>0$ on the level set $\left\{ x\in {\cal X}:h_{0}\left(x\right)=0\right\} $.
\end{itemize}
\end{assumption}

\subsection{Functional Derivatives via Generalized Leibniz rule}\label{sec:Diff}

By the standard semiparametric theory on the estimation of functionals
of nonparametric regression functions, the asymptotic property of
the semiparametric plug-in estimator $\Phi\left(\hat{h}\right)$
can be analyzed via the functional derivative
of $\Phi\text{\ensuremath{\left(h\right)}}$ w.r.t.
$h_0$ in the direction of $h-h_{0}$, i.e., writing $h_{t}:=h_{0}+t\left(h-h_{0}\right)$,
\begin{align}
D_{h}\Phi\left(h_{0}\right)\left[h-h_{0}\right] & :=\rest{\frac{d}{dt}\Phi\left(h_{t}\right)}_{t=0}=\rest{\frac{d}{dt}\int\ind\left\{ h_{t}\left(x\right)\geq0\right\} \phi\left(h_t(x),x\right)f\left(x\right)dx}_{t=0}.\label{eq:Dh_Phi}\\
D_{h}W\left(h_{0}\right)\left[h-h_{0}\right] & :=\rest{\frac{d}{dt}W\left(h_{t}\right)}_{t=0}=\rest{\frac{d}{dt}\int\left[h_{t}\left(x\right)\right]_{+}f\left(x\right)dx}_{t=0}\label{eq:Dh_Wh}\\
D_{h}V\left(h_{0}\right)\left[h-h_{0}\right] & :=\rest{\frac{d}{dt}V\left(h_{t}\right)}_{t=0}=\rest{\frac{d}{dt}\int\ind\left\{ h_{t}\left(x\right)\geq0\right\} v_0\left(x\right)f\left(x\right)dx}_{t=0}\label{eq:Dh_Vh}
\end{align}
The presence of the ReLU/max function $\left[t\right]_{+}$ in $D_{h}W\left(h_{0}\right)$ and the indicator function $\ind\left\{ t\geq0\right\} $ induces a point of nonsmoothness at $t=0$, where $\left[t\right]_{+}$ is nondifferentiable and $\ind\left\{ t\geq0\right\} $ is discontinuous. This complicates the calculation of the functional derivatives in \eqref{eq:Dh_Wh} and \eqref{eq:Dh_Vh}, though to different degrees.

We now provide an overview of the key difference between the welfare and value functions from the perspective of the generalized Leibniz rule,
which has the following generic form regarding the total time derivative of integrals
with changing integrand and changing region of integration:
\begin{equation}
\frac{d}{dt}\int_{\O_{t}}G_{t}\left(x\right)dx.\label{eq:time_deriv}
\end{equation}
The generalized Leibniz rule,\footnote{See, for example, Theorem 4.2 of \cite{delfour2001shapes}.} 
states that, under mild regularity conditions,
\begin{align}
\frac{d}{dt}\int_{\O_{t}}G_{t}\left(x\right)dx & =\underset{({\bf I})}{\underbrace{\int_{\O_{t}}\frac{\p}{\p t}G_{t}\left(x\right)dx}}+\underset{\left({\bf II}\right)}{\underbrace{\int_{\p\O_{t}}\left\langle {\bf n}_{t}\left(x\right),{\bf v}_{t}\left(x\right)\right\rangle G_{t}\left(x\right)dS_{\p\O_{t}}\left(x\right)}}\label{eq:Reynolds}
\end{align}
where term (I) captures the effect of the change in the integrand
$G_{t}\left(x\right)$ with the region of integration $\O_{t}$ held
fixed, while term (II) captures the effect of the change in the region
of integration $\O_{t}$ with the integrand $G_{t}\left(x\right)$
held fixed. The somewhat ``nonstandard'' term (II) warrants some
more explanations: $\p\O_{t}$ denotes the boundary of $\O_{t}$,
${\bf n}_{t}\left(x\right)$ is the outward-pointing unit normal vector,
${\bf v}_{t}\left(x\right)$ is the velocity vector associated with
the time movement in the $\p\O_{t}$, and $S_{\p\O_{t}}\left(x\right)$
denotes the surface measure on the boundary $\p\O_{t}$. Note that,
when $x$ is one-dimensional and $\O_{t}=\left[a_{t},b_{t}\right]$,
\eqref{eq:Reynolds} specializes to the standard Leibniz rule:
\[
\frac{d}{dt}\int_{a_{t}}^{b_{t}}G_{t}\left(x\right)dx=\int_{a_{t}}^{b_{t}}\frac{\p}{\p t}G_{t}\left(x\right)dx+G_{t}\left(b_{t}\right)\frac{d}{dt}b_{t}-G_{t}\left(a_{t}\right)\frac{d}{dt}a_{t}.
\]

We observe that $D_{h}\Phi\left(h_{0}\right)$, $D_{h}W\left(h_{0}\right)$ and $D_{h}V\left(h_{0}\right)$
are all of the form \eqref{eq:time_deriv}, with the same parametrized
region of integration 
\[
\O_{t}:=\left\{ x\in \R^d:h_{t}\left(x\right)\geq0\right\}~,~~\O_{0}:=\left\{ x\in \R^d:h_{0}\left(x\right)\geq0\right\}
\]
and $\p\O_{0}= \left\{ x\in \R^d:h_{0}\left(x\right)=0\right\} $ under mild regularity conditions on $h_0$.

Applying the generalized Leibniz rule \eqref{eq:Reynolds} to the functional $\Phi(h)$, we obtain:
\begin{align}
D_{h}\Phi\left(h_{0}\right)\left[h-h_{0}\right]= & \underset{({\bf I})}{\underbrace{\int_{\O_{0}} \partial_1 \phi\left(h_{0}(x),x\right)\left(h\left(x\right)-h_{0}\left(x\right)\right)f\left(x\right)dx}}\nonumber \\
 & +\underset{({\bf II})}{\underbrace{\int_{\p\O_{0}}\left\langle {\bf n}_{0}\left(x\right),{\bf v}_{0}\left(x\right)\right\rangle \phi\left(h_{0}(x),x\right)f\left(x\right)dS_{\p\O_{0}}\left(x\right)}}\label{eq:Dh_Phi}
\end{align}
where the first term (I) is a full-dimensional Lebesgue integral (in $ \R^d$),
while the second term (II) is a lower-dimensional boundary integral.
In particular, when $\phi\left(h_{0}(x),x\right) f(x)$ does not vanish on $\p\O_{0}=\left\{ x\in \R^d:h_{0}\left(x\right)=0\right\}$, the second term (II) cannot
be ignored, despite $\p\O_{0}$ has Lebesgue measure zero in $ \R^d$. In fact, under Assumption \ref{assu:main-functional}(c) (see, e.g., \citet*{chen2025semiparametric}), the second term (II) of \eqref{eq:Dh_Phi} can be expressed as 
\begin{align}
\int_{\p\O_{0}}\left\langle {\bf n}_{0}\left(x\right),{\bf v}_{0}\left(x\right)\right\rangle \phi\left(h_{0}(x),x\right)f\left(x\right)dS_{\p\O_{0}}\left(x\right)=  \int_{\p\O_{0} }\frac{h(x)-h_0 (x)}{\norm{\Dif_x h_{0}\left(x\right)}}\phi\left(h_{0}(x),x\right)f\left(x\right)d{\cal H}^{d-1}\left(x\right),\label{eq:Dh_Phi-II}
\end{align}
where ${\cal H}^{d-1}$ denotes the $(d-1)$ dimensional Hausdorff measure (see \citet*{chen2025semiparametric}), which coincides with the $(d-1)$ dimensional Lebesgue measure in $\R^{d-1}$. 
Thus the boundary integral term (II) of \eqref{eq:Dh_Phi} is a lower-dimensional integral functional that only extracts information about $h_{0}$ on a Lebesgue measure-$0$ set (in $\R^d$).


%

For the welfare function $\Phi(h_0)=W(h_0)$, plugging $\phi(h_0(x),x) = h_0(x)$ into \eqref{eq:Dh_Phi} yields 
\begin{equation}
D_{h}W\left(h_{0}\right)\left[h-h_{0}\right]=\int\ind\left\{ h_{0}\left(x\right)\geq0\right\} \left[h\left(x\right)-h_{0}\left(x\right)\right]f\left(x\right)dx,\label{eq:Dh_Wh_form}
\end{equation}
where the term (II) vanishes since $\phi(h_0(x),x) = h_0(x) = 0$ on the boundary $\p\O_{0}$, and the term (I) is a full-dimensional Lebesgue integral functional of $h-h_{0}$. 

For the value function $\Phi(h_0)=V(h_0)$, plugging $\phi(h_0(x),x) = v_0(x)$ with $\partial_1 \phi\left(h_{0}(x),x\right)=0$ into \eqref{eq:Dh_Phi} and \eqref{eq:Dh_Phi-II} yields 
\begin{align}
D_{h}V\left(h_{0}\right)\left[h-h_{0}\right]= & \int_{\left\{x\in\R^d: h_{0}\left(x\right)=0\right\} }\frac{\left(h\left(x\right)-h_0\left(x\right)\right)}{\norm{\Dif_{x} h_{0}\left(x\right)}}v_0\left(x\right)f\left(x\right)d{\cal H}^{d-1}\left(x\right).\label{eq:Dh_Vh_form}
\end{align}
which is not $0$ as long as $v_0(x)f(x)$ does not vanish
on the boundary $\p\O_{0}= \left\{ x\in \R^d:h_{0}\left(x\right)=0\right\}$. For example, setting $v_0\left(x\right)\equiv1$
yields $V\left(h_{0}\right)=P_{f}\left(h_{0}\left(X_{i}\right)\geq0\right)$,
the share of population with nonnegative CATE, and the boundary integral term (II) does not vanish. In fact, as long as $v_0(x)f(x)\neq 0$ on  the boundary $\p\O_{0}= \left\{ x\in \R^d:h_{0}\left(x\right)=0\right\}$, $D_{h}V\left(h_{0}\right)$ is a non-zero $(d-1)$ dimensional
integral functional that only extracts information about $h_{0}$
on a Lebesgue measure-$0$ set (in $\R^d$), akin to a point evaluation of a nonparametric estimation.

\subsection{Key Difference between the Welfare and Value Functionals}

Let $L^2(f)$ denote the Hilbert space of square integrable (against $f$) functions with the inner product $\left\langle g,h\right\rangle _{2,f}:=\int g\left(x\right)h\left(x\right)f\left(x\right)dx$. For the CATE function $h_0 \in L^2(f)$, it is well-known that a linear functional $L\left[h-h_{0}\right]$ is bounded (or equivalently, continuous) if and only if
\[
 \sup_{\nu \neq 0, \nu \in L^2 (f)}\frac{|L\left[\nu (\cdot)\right]|^2}{E_f[|\nu (X)|^2]}<\infty
\]
which is a necessary and sufficient condition for the existence of a Riesz representer $\nu^*\in L^2(f)$ such that 
\[
L[\nu]=\left\langle \nu^*,\nu\right\rangle _{2,f}~~~\text{for all}~\nu \in L^2(f)
\]
This in turn is a necessary condition for any plug-in estimator of the linear functional $L\left[\hat{h}-h_{0}\right]=\left\langle \nu^*,\hat{h}-h_{0}\right\rangle _{2,f}$ to converge to zero at a root-$n$ rate.

For the welfare functional, its linear directional derivative functional $L[\nu]=D_{h}W\left(h_{0}\right)\left[\nu\right]$ given in \eqref{eq:Dh_Wh_form}, we immediately see that $\nu^{*}\left(x\right):=\ind\left\{ h_{0}\left(x\right)\geq0\right\} $
is the Riesz representer of the linear functional $D_{h}W\left(h_{0}\right)\left[\nu\right]$
in the Hilbert space $L^2(f)$, and that this Riesz representer has bounded
norm
\[
\norm{\nu^{*}}^{2}:=\int\ind^{2}\left\{ h_{0}\left(x\right)\geq0\right\} f\left(x\right)dx\leq1,
\]
and thus, by well-known results in, say, \citet*{chen2014sieveIrregular}, \citet*{chen2014sieve} and \cite{chenpouzo2015sieve},
the linear functional $D_{h}W\left(h_{0}\right)\left[\nu\right]$ is a regular (i.e., $\sqrt{n}$-estimable) functional
under appropriate conditions.

In contrast, for the general value functional, the linear functional corresponding to its directional derivative
$D_{h}V\left(h_{0}\right)\left[\nu \right]$ given in \eqref{eq:Dh_Vh_form} 
does not have a well-defined Riesz representer in the Hilbert space $L^2(f)$.
It is well-known that, according to Lemma 3.3 of \cite{chenpouzo2015sieve}, 
Consequently, the functional $V$ becomes an irregular functional that
cannot be estimated at $\sqrt{n}$ rate.


The above provides an intuitive explanation of why the welfare functional
$W\left(h_{0}\right)$ is very special relative to general types of
value functionals $V\left(h_{0}\right)$ or $\Phi\left(h_{0}\right)$,
and clarifies why $W\left(h_{0}\right)$ could be $\sqrt{n}$-estimable
while others generally cannot. In subsequent sections, we provide
formal conditions and theorems that establish the $\sqrt{n}$-normality
of plug-in estimators of the welfare functional $W\left(h_{0}\right)$,
as well as the slower-than-$\sqrt{n}$ asymptotic normality for the
value functional $V\left(h_{0}\right)$. 

~

\begin{Remark}[An Alternative View]
We also briefly discuss alternative view of the determinant of $\sqrt{n}$-estimability
of the welfare fucntional $W\left(h_{0}\right)$, based on the Lipchitz
continuity of the ReLU/max function $\left[t\right]_{+}$. We note
that, under mild conditions ensuring that the level set $\left\{ x:h_{0}\left(x\right)=0\right\} $
has Lebesgue measure $0$, we may interchange the order of differentiation
and integral based on the almost sure differentiability of $\left[h_{t}\left(x\right)\right]_{+}$
and the dominant convergence theorem:
\begin{align}
D_{h}W\left(h_{0}\right)\left[h-h_{0}\right] & =\rest{\frac{d}{dt}\int\left[h_{t}\left(x\right)\right]_{+}f\left(x\right)dx}_{t=0}\nonumber \\
& =\int\rest{\frac{d}{dt}\left[h_{t}\left(x\right)\right]_{+}}_{t=0}f\left(x\right)dx\label{eq:Dw_Lip}\\
& =\int\ind\left\{ h_{0}\left(x\right)\geq0\right\} \left(h\left(x\right)-h_{0}\left(x\right)\right)f\left(x\right)dx,\nonumber 
\end{align}
which yields the same formula as in \eqref{eq:Dh_Wh_form}. However,
for general value functional $V\left(h_{0}\right)$, the indicator
function $\ind\left\{ t\geq0\right\} $ is no longer Lipchitz, and
there is no analog of \eqref{eq:Dw_Lip}: the functional derivative
$D_{h}V\left(h_{0}\right)$ needs to be derived using the generalized
Leibniz rule as described above (or its many variants or generalized
forms in differential geometry and geometric measure theory). 
\end{Remark}

\section{\label{sec:WelfareFunc}Estimation and Inference of the Welfare Functional }

In this section, we focus on the welfare functional defined in \eqref{eq:W_h0},
which can also be equivalently written as a functional of $\mu_{0}$
as follows:
\begin{align}
W\left(h_{0}\right) & :=\int\left[h_{0}\left(x\right)\right]_{+}f\left(x\right)dx\nonumber \\
\equiv\ol W\left(\mu_{0}\right) & :=\int\left[\mu_{0}\left(x,1\right)-\mu_{0}\left(x,0\right)\right]_{+}f\left(x\right)dx\label{eq:W_mu0-1}
\end{align}
We write out the two equivalent definitions of the functionals based
on $h_{0}$ and $\mu_{0}$, since each representation has its own
merit. The representations $W\left(h_{0}\right)$ based on the CATE
function $h_{0}$ is clearer in terms of its interpretation: the condition
$h_{0}\left(x\right)\geq0$ is a direct optimal treatment assignment,
and this representation has been adopted in previous work such as
\citet*{kitagawa2018should}. On the other hand, the representations
$\ol W\left(\mu_{0}\right)$ based on $\mu_{0}$ is clearer in terms
of the underlying nonparametric regression function, which is notationally
easier to work with in our subsequent semiparametric asymptotic analysis.

We consider two different setups for the estimation of $W\left(h_{0}\right)$,
depending on how we treat the marginal density $f\left(x\right)$. 

In the first setup, we treat $f$ as known and the functional $W\left(\cd\right)$
as a  known transformation of $h_{0}$. This is relevant
in cases where the covariate density $f\left(x\right)$ of the target
population is either configured or known/estimated from other sources
than the training sample used to estimated CATE. In the formulation
of $W\left(h_{0}\right)$ and $V\left(h_{0}\right)$ above, we take
the density $f\left(x\right)$ to be known. For example, CATE may
be estimated from a smaller pilot program (with sample size $n$),
while the policymakers are contemplating to implement the policy on
a statewide or nation-wide basis with a much larger target population,
whose covariate density may either be known at the population level
or estimated from an alternative data source with much larger sample
size $N>>n$ (so that the sampling uncertainty in the estimation of
$f$ becomes negligible relative to that in the estimation of $h_{0}$). 

In the second setup, we take $f$ to be unknown and set it to $f_{0}$,
the covariate density in the underlying population of the training
sample. In this case, $f$ does not need to be explicitly estimated,
but the integral in $W\left(\cd\right)$ with respect to $f$ can
be naturally approximated via sample average in the training sample.
This corresponds more closely to the ``empirical welfare'' as considered
in \citet*{kitagawa2018should}. We also provide results for this
setting as well.

\subsection{\label{subsec:Welfare_IntF}Welfare Functional Under Known Density
$f$}

We start with the first setup, where the covariate density $f$ is
taken to be known and $W\left(\cd\right)$ is treated as a deterministic
known integral of $h_{0}$ w.r.t. $f$. In this case, we can define a simple nonparametric plug-in
estimators of $W\left(h_{0}\right)$ as
\[
W\left(\hat{h}\right)\equiv\ol W\left(\hat{\mu}\right)=\int\left[\hat{h}\left(x\right)\right]_{+}f\left(x\right)dx.
\]
%

We first state some key assumptions. Let $\hat{h}\left(x\right):=\hat{\mu}\left(x,1\right)-\hat{\mu}\left(x,0\right)$ be a nonparametric estimator of the CATE function, in which
$\hat{\mu}\left(x,d\right)$ is a first-stage estimator of the
nonparametric regression model 
\begin{equation}
Y_{i}=\mu_{0}\left(X_{i},D_{i}\right)+\e_{i},\quad\E\left[\rest{\e_{i}}X_{i},D_{i}\right]=0,\quad\E\left[\rest{\e^2_{i}}X_{i},D_{i}\right]<\infty.\label{eq:NP_reg_mu}
\end{equation}

\begin{assumption}
\label{assu:FS_quart_rate}Fist-Stage Convergence: $\norm{\hat{\mu}-\mu_{0}}_{\infty}=o_{p}\left(n^{-1/4}\right)$. 

\end{assumption}

\begin{thm}[$\sqrt{n}$-asymptotic normality for the welfare functional]
\label{thm:Welfare_intF} Under Assumptions \ref{assu:main} and \ref{assu:main-functional}(b)(c), we have:
\[
\nu^{*}\left(x,d\right):=\ind\left\{ h_{0}\left(x\right)\geq0\right\} \l\left(x\right)\left(\frac{d}{p_{0}\left(x\right)}-\frac{1-d}{1-p_{0}\left(x\right)}\right)
\]
is the Riesz representer for the linear functional $D_{\mu}\ol W\left(\mu_{0}\right)\left[\cdot\right]$.\\
If furthermore Assumption \ref{assu:FS_quart_rate} holds, we have:
\[
\sqrt{n}\left(\ol W\left(\hat{\mu}\right)-\ol W\left(\mu_{0}\right)\right)=\frac{1}{\sqrt{n}}\sum_{i=1}^{n}\nu^{*}\left(X_{i},D_{i}\right)\e_{i}+o_{p}\left(1\right).
\]
%
Then:
\[
\sqrt{n}\left(W\left(\hat{h}\right)-W\left(h_{0}\right)\right)\equiv\sqrt{n}\left(\ol W\left(\hat{\mu}\right)-\ol W\left(\mu_{0}\right)\right)\dto\cN\left(0,\s_{W}^{2}\right),
\]
with
\begin{align*}
\s_{W}^{2} & :=\E\left[\left(\nu^{*}\left(X_{i},D_{i}\right)\e_{i}\right)^2\right]=\E\left[\frac{\ind\left\{ h_{0}\left(X_{i}\right)\geq0\right\} \l^{2}\left(X_{i}\right)\s_{\e}^{2}\left(X_{i}\right)}{p_{0}\left(X_{i}\right)\left(1-p_{0}\left(X_{i}\right)\right)}\right],
\end{align*}
where $\s_{\e}^{2}\left(x\right):=\E\left[\rest{\e_{i}^{2}}X_{i}=x\right]$. 
\end{thm}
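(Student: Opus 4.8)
The statement bundles three claims, which I would prove in order: (i) that the displayed $\nu^{*}$ is the Riesz representer of the linear directional-derivative functional $D_{\mu}\ol W\left(\mu_{0}\right)\left[\cdot\right]$ in the Hilbert space of square-integrable functions of $\left(X,D\right)$ under the population distribution; (ii) the asymptotically linear (influence-function) representation of $\ol W\left(\hat{\mu}\right)-\ol W\left(\mu_{0}\right)$; and (iii) the CLT for the resulting i.i.d.\ average together with the closed form for $\s_{W}^{2}$.

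\textbf{Step (i): the Riesz representer.} First I would compute the Gateaux derivative of $\ol W$. Since $\left[\cdot\right]_{+}$ is $1$-Lipschitz and differentiable off $0$, and since Assumption~\ref{assu:main-functional}(c) (with $s>1$, so $h_{0}\in C^{1}$) forces the level set $\left\{ h_{0}=0\right\} $ to be a $C^{1}$ hypersurface of Lebesgue measure zero, dominated convergence lets me pass $\tfrac{d}{dt}$ under the integral in \eqref{eq:W_mu0-1} to obtain
\[
D_{\mu}\ol W\left(\mu_{0}\right)\left[\mu-\mu_{0}\right]=\int\ind\left\{ h_{0}\left(x\right)\geq0\right\} \left[\left(\mu-\mu_{0}\right)\left(x,1\right)-\left(\mu-\mu_{0}\right)\left(x,0\right)\right]f\left(x\right)dx.
\]
I would then substitute $f=\l f_{0}$ and rewrite the right-hand side as an expectation over $\left(X,D\right)$ via iterated expectations conditioning on $X$ (using $D\mid X\sim\mathrm{Bernoulli}\left(p_{0}\left(X\right)\right)$); matching the coefficients of $\left(\mu-\mu_{0}\right)\left(X,1\right)$ and $\left(\mu-\mu_{0}\right)\left(X,0\right)$ yields $\nu^{*}\left(x,1\right)=\ind\left\{ h_{0}\left(x\right)\geq0\right\} \l\left(x\right)/p_{0}\left(x\right)$ and $\nu^{*}\left(x,0\right)=-\ind\left\{ h_{0}\left(x\right)\geq0\right\} \l\left(x\right)/\left(1-p_{0}\left(x\right)\right)$, i.e.\ the displayed $\nu^{*}$; its square-integrability follows from overlap (Assumption~\ref{assu:main}(b), together with continuity and compactness giving $p_{0}$ bounded away from $0$ and $1$) and boundedness of $\l$ (Assumption~\ref{assu:main-functional}(b)).

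\textbf{Step (ii): the linearization, and the main obstacle.} I would split $\ol W\left(\hat{\mu}\right)-\ol W\left(\mu_{0}\right)=D_{\mu}\ol W\left(\mu_{0}\right)\left[\hat{\mu}-\mu_{0}\right]+R_{n}$, show $R_{n}=o_{p}\left(n^{-1/2}\right)$, and then invoke the semiparametric linearization for smooth linear functionals of a nonparametric conditional-expectation function (the companion paper \citet*{chen2025semiparametric}; see also \citet*{chen2014sieve} and \cite{chenpouzo2015sieve}), which — given that the derivative's Riesz representer has been identified in Step~(i) and the first-stage rate Assumption~\ref{assu:FS_quart_rate} holds — delivers $D_{\mu}\ol W\left(\mu_{0}\right)\left[\hat{\mu}-\mu_{0}\right]=\tfrac{1}{n}\sum_{i=1}^{n}\nu^{*}\left(X_{i},D_{i}\right)\e_{i}+o_{p}\left(n^{-1/2}\right)$. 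The genuinely new ingredient, and the step I expect to be the main obstacle, is the bound on $R_{n}$: here is where ``the integrand vanishes on the boundary'' is used. I would use the elementary ReLU inequality $\big|\left[a\right]_{+}-\left[b\right]_{+}-\ind\left\{ b\geq0\right\} \left(a-b\right)\big|\leq\abs{a-b}\,\ind\left\{ \abs{b}\leq\abs{a-b}\right\} $ with $a=\hat{h}\left(x\right)$, $b=h_{0}\left(x\right)$ (recalling $\hat h-h_0=(\hat\mu(\cdot,1)-\mu_0(\cdot,1))-(\hat\mu(\cdot,0)-\mu_0(\cdot,0))$, which makes the leading term of the expansion equal to $D_{\mu}\ol W\left(\mu_{0}\right)\left[\hat\mu-\mu_0\right]$), so that integrating against $f$ gives $\abs{R_{n}}\leq\norm{\hat{h}-h_{0}}_{\infty}\cdot\P_{f}\big(\abs{h_{0}\left(X\right)}\leq\norm{\hat{h}-h_{0}}_{\infty}\big)$. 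The key geometric lemma is then a margin/tube-volume estimate: under Assumption~\ref{assu:main-functional}(c), since $\Dif_{x}h_{0}$ is continuous and $\left\{ h_{0}=0\right\} $ is compact, $\norm{\Dif_{x}h_{0}}\geq\ul c/2$ on a neighborhood of the level set, and the coarea formula (or the implicit function theorem), together with boundedness of the covariate density, yields $\P_{f}\left(\abs{h_{0}\left(X\right)}\leq\e\right)=O\left(\e\right)$ as $\e\downarrow0$. Combining, $\abs{R_{n}}=O_{p}\big(\norm{\hat{h}-h_{0}}_{\infty}^{2}\big)$, and since $\norm{\hat{h}-h_{0}}_{\infty}\leq2\norm{\hat{\mu}-\mu_{0}}_{\infty}=o_{p}\left(n^{-1/4}\right)$ by Assumption~\ref{assu:FS_quart_rate}, we get $R_{n}=o_{p}\left(n^{-1/2}\right)$. (For the value functional $V$, the analogous remainder would only be $o_{p}\big(\norm{\hat{h}-h_{0}}_{\infty}\big)$, which is \emph{not} $o_{p}\left(n^{-1/2}\right)$ — this is exactly the dividing line emphasized in the introduction.)

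\textbf{Step (iii): CLT and variance.} The summands $\nu^{*}\left(X_{i},D_{i}\right)\e_{i}$ are i.i.d.\ with mean zero (by $\E\left[\rest{\e_{i}}X_{i},D_{i}\right]=0$) and finite variance $\s_{W}^{2}:=\E\left[\left(\nu^{*}\left(X_{i},D_{i}\right)\e_{i}\right)^{2}\right]$ (finite by the same overlap and boundedness arguments as in Step~(i) plus $\E\left[\rest{\e_{i}^{2}}X_{i},D_{i}\right]<\infty$), so the Lindeberg--L\'evy CLT and Slutsky's theorem give $\sqrt{n}\left(\ol W\left(\hat{\mu}\right)-\ol W\left(\mu_{0}\right)\right)\dto\cN\left(0,\s_{W}^{2}\right)$; since $W\left(\hat{h}\right)\equiv\ol W\left(\hat{\mu}\right)$ this is the claimed statement. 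The closed form follows from one more iterated expectation conditioning on $X$: $\s_{W}^{2}=\E\left[\s_{\e}^{2}\left(X\right)\left(p_{0}\left(X\right)\nu^{*}\left(X,1\right)^{2}+\left(1-p_{0}\left(X\right)\right)\nu^{*}\left(X,0\right)^{2}\right)\right]=\E\left[\ind\left\{ h_{0}\left(X\right)\geq0\right\} \l^{2}\left(X\right)\s_{\e}^{2}\left(X\right)/\left(p_{0}\left(X\right)\left(1-p_{0}\left(X\right)\right)\right)\right]$, as stated.
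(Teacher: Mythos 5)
Your proposal is correct and follows the same overall architecture as the paper's proof: pass the derivative under the integral via the Lipschitz property of $\left[\cdot\right]_{+}$ and the measure-zero level set, rewrite $D_{\mu}\ol W\left(\mu_{0}\right)\left[\nu\right]$ as an expectation over $\left(X,D\right)$ by iterated expectations to read off $\nu^{*}$ and verify its finite norm under overlap and bounded $\l$, treat the linear term $D_{\mu}\ol W\left(\mu_{0}\right)\left[\hat{\mu}-\mu_{0}\right]$ by the standard sieve/semiparametric linearization (the paper is equally terse here, asserting it with reference to the same literature), and finish with the Lindeberg--L\'evy CLT and one more conditioning step for the closed-form variance. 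The one place you genuinely diverge is the quadratic remainder bound, which you correctly flag as the crux: you establish $\abs{R_{n}}\leq\norm{\hat{h}-h_{0}}_{\infty}\,\P_{f}\!\left(\abs{h_{0}\left(X\right)}\leq\norm{\hat{h}-h_{0}}_{\infty}\right)$ through an elementary pointwise ReLU inequality and then a margin (tube-volume) estimate $\P_{f}\left(\abs{h_{0}\left(X\right)}\leq\e\right)=O\left(\e\right)$ from the coarea formula under Assumption \ref{assu:main-functional}(c); the paper instead bounds the second functional derivative $D_{h}^{2}W\left(h_{0}\right)\left[\nu,u\right]$, which is a Hausdorff integral over $\left\{ h_{0}=0\right\} $ controlled by $\norm{\Dif_{x}h_{0}}\geq\ul{c}$, to get $\abs{\ol W\left(\hat{\mu}\right)-\ol W\left(\mu_{0}\right)-D_{\mu}\ol W\left(\mu_{0}\right)\left[\hat{\mu}-\mu_{0}\right]\right|\leq M\norm{\hat{\mu}-\mu_{0}}_{\infty}^{2}$. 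Both devices rest on exactly the same geometric input (the regular level set) and deliver the same $O_{p}\left(\norm{\hat{\mu}-\mu_{0}}_{\infty}^{2}\right)=o_{p}\left(n^{-1/2}\right)$ conclusion under Assumption \ref{assu:FS_quart_rate}; your route is more elementary and self-contained, while the paper's route connects the remainder directly to the boundary-integral machinery it develops for the general value functional, making the contrast between $W$ and $V$ (where the first-order boundary term does not vanish) explicit within a single framework.
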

Theorem \ref{thm:Welfare_intF} suggests the following natural estimator for the asymptotic variance $\s_{W}^{2}$:
\begin{align}\label{eq:se_analytic}
\hat{\s}_{W}^{2} & :=\frac{1}{N}\sum_{i}\frac{\ind\left\{ \hat{h}\left(X_{i}\right)\geq0\right\} \l^{2}\left(X_{i}\right)\hat{u}_{i}^{2}}{\hat{p}\left(X_{i}\right)\left(1-\hat{p}\left(X_{i}\right)\right)}
\end{align}
where $\hat{u}_{i}:=Y_{i}-\hat{h}\left(X_{i}\right)$. This requires nonparametric estimation of propensity score function $p(x)$, as well as the knowledge (or nonparametric estimation) of the density $f_0$ or the Radon-Nikodym derivative $\l(x)$.

Alternatively and more preferably, we can use the sieve-based asymptotic variance estimator, which does not require estimation or knowledge of $p(x)$ and $\l(x)$. To do so, we first clarify some subtlety in the definition of the sieve in the first-stage nonparametric estimation of CATE as $\hat{\mu}(x,1)-\hat{\mu}(x,0)$, where $\hat{\mu}$ is a linear
sieve estimator of $\mu_{0}\left(x,d\right)$ under random design
on $\left(X_{i},D_{i}\right)$ with $D_{i}$ being binary. 

In practice,
$\hat{\mu}\left(x,1\right)$ is estimated with $K_1$ linear series
in the treated subsample, while $\hat{\mu}\left(x,0\right)$ is estimated
separately with potentially different $K_0$ linear series in the untreated subsample.
However, since we treat $D_{i}$ as a random variable, we cannot directly
treat $\hat{\mu}\left(x,1\right)$ and $\hat{\mu}\left(x,0\right)$
as two completely separate nonparametric estimators with exogenously
given sample sizes. Instead, we treat $\hat{\mu}$ as the least square
estimator of 
\begin{equation} \label{Sieve_OLS}
Y_{i}=D_{i}\psi^{\left(K_1\right)}\left(X_{i}\right)^{'}\b_{1}+\left(1-D_{i}\right)\psi^{\left(K_0\right)}\left(X_{i}\right)^{'}\b_{0}+u_{i}    
\end{equation}
where  $\psi^{(K_1)}(x) = (\psi_1(x), \ldots, \psi_{K_1}(x))'$ and $\psi^{(K_0)}(x) = (\psi_{K_1+1}(x), \ldots, \psi_{K_1+K_0}(x))'$ denote the B-spline basis functions used to estimate $\mu_0(x, 1)$ and $\mu_0(x, 0)$, respectively, with sieve dimensions $K_1$ and $K_0$.

Define the vector-valued function $\overline{\psi}(x)$ such that its $k$th component equals $d\psi_k(x)$ for $k \in \{1, \ldots, K_1\}$ and $(1-d)\psi_k(x)$ for $k \in \{K_1+1, \ldots, K_1+K_0\}$. Following \citet*{chen2018optimal}, specifically equations (6) and (7), the variance (or standard error) estimator in our setting takes the form
\begin{equation}\label{eq:se_sieve}
   \hat{\s}_{W}^{2}:=D_\mu \overline{W}\left(\hat{\mu}\right)\left[\overline{\psi}\right]'\hat{\O}D_\mu \overline{W} \left(\hat{\mu}\right)\left[\overline{\psi}\right], 
\end{equation}
where $\hat{\Omega}$ denotes the estimated asymptotic covariance matrix of the OLS estimators in \eqref{Sieve_OLS} given by 
\[
\hat{\O}:=\left(\Psi^{\left(2K\right)}\Psi^{\left(2K\right)'}\right)^{-1}\left(\frac{1}{n}\sum_{i=1}^{n}u_{i}^{2}\Psi^{\left(2K\right)}\Psi^{\left(2K\right)'}\right)\left(\Psi^{\left(2K\right)}\Psi^{\left(2K\right)'}\right)^{-1}
\]
and the estimated directional derivative vector $D_\mu \ol{W}\left(\hat{\mu}\right)\left[\ol{\psi}\right]$ is given by
\[
D_\mu \ol{W}\left(\hat{\mu}\right)\left[\ol{\psi}\right] = 
\begin{pmatrix}
\int_{\{ \hat{\mu}(x,1) - \hat{\mu}(x,0)\geq 0 \}} \psi^{(K_1)}(x)f(x)dx\\
-\int_{\{  \hat{\mu}(x,1) - \hat{\mu}(x,0)\geq 0 \}} \psi^{(K_0)}(x) f(x)dx
\end{pmatrix},
\]
where the minus sign in front of the sieve terms in $\psi^{K_0}(x)$ is due to the presence of the minus sign in front of $\mu(x,0)$ in  $\text{CATE}(x)=\mu(x,1)-\mu(x,0)$. 

We use Sobol points to numerically compute the integral above. See the simulation section for more details.

\subsection{\label{subsec:Welfare_mean}Welfare Functional Under Unknown Density
$f=f_{0}$}

In certain cases, such as in \citet*{kitagawa2018should},
one might be interested in the welfare functional under the original
distribution of covariates $F_{0}$, which may not be known or controlled.
Given the random sample of $\left(Y_{i},D_{i},X_{i}\right)_{i=1}^{n}$
used to estimate the CATE $h_0$, it is natural to use the sample mean
$\frac{1}{n}\sum_{i=1}^{n}[\cdot]$ as an estimator of the population expectation
$\int [\cdot]dF_0$, in which case a natural plug-in estimator of $W\left(h_{0}\right)\equiv\ol W\left(\mu_{0}\right)$
is given by
\[
\hat{W}\left(\hat{h}\right)\equiv\hat{\ol W}\left(\hat{\mu}\right):=\frac{1}{n}\sum_{i=1}^{n}\left[\hat{\mu}\left(X_{i},1\right)-\hat{\mu}\left(X_{i},0\right)\right]_{+}\equiv\frac{1}{n}\sum_{i=1}^{n}\left[\hat{h}\left(X_{i}\right)\right]_{+}.
\]
Clearly, the approximation of the integral introduces an additional
source of randomness, but the result established in the last subsection
continues to be useful in this case.
\begin{thm}
\label{thm:Welfare_mean} Under Assumptions \ref{assu:main}, \ref{assu:main-functional}(b)(c) and \ref{assu:FS_quart_rate},
\begin{align*}
 & \sqrt{n}\left(\hat{W}\left(\hat{h}\right)-W\left(h_{0}\right)\right)\equiv\sqrt{n}\left(\hat{\ol W}\left(\hat{\mu}\right)-\ol W\left(\mu_{0}\right)\right)\\
= & \frac{1}{\sqrt{n}}\sum_{i=1}^{n}\left(\left[h_{0}\left(X_{i}\right)\right]_{+}-W\left(h_{0}\right)+\nu^{*}\left(X_{i},D_{i}\right)\e_{i}\right)+o_{p}\left(1\right)\dto\cN\left(0,\ol{\s}_{W}^{2}\right)
\end{align*}
where $\ol{\s}_{W}^{2}:=\E\left[\left(\left[h_{0}\left(X_{i}\right)\right]_{+}-W(h_0)+\nu^{*}\left(X_{i},D_{i}\right)\e_{i}\right)^2\right]=\text{\text{Var}\ensuremath{\left(\left[h_{0}\left(X_{i}\right)\right]_{+}\right)}}+\s_{W}^{2}$. 
\end{thm}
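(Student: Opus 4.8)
Throughout this setup $f=f_{0}$, so the Radon--Nikodym derivative is $\l\equiv1$. Writing $P_{n}$ for the empirical measure of $\{X_{i}\}_{i=1}^{n}$, $P_{0}$ for its true law (density $f_{0}$), and setting $g(x):=[h_{0}(x)]_{+}$, $\hat g(x):=[\hat h(x)]_{+}$ (so $W(h_{0})=P_{0}g$ and $\hat{\ol W}(\hat\mu)=P_{n}\hat g$), the plan is to peel off the extra randomness caused by replacing $P_{0}$ with $P_{n}$ and then invoke Theorem~\ref{thm:Welfare_intF} for the remaining integral functional. Concretely, I would use the decomposition
\begin{align*}
\sqrt{n}\bigl(\hat{\ol W}(\hat\mu)-\ol W(\mu_{0})\bigr)
&=\sqrt{n}\,(P_{n}-P_{0})\hat g+\sqrt{n}\,\bigl(\ol W(\hat\mu)-\ol W(\mu_{0})\bigr)\\
&=\underbrace{\sqrt{n}\,(P_{n}-P_{0})(\hat g-g)}_{(a)}
+\underbrace{\sqrt{n}\,(P_{n}-P_{0})g}_{(b)}
+\underbrace{\sqrt{n}\,\bigl(\ol W(\hat\mu)-\ol W(\mu_{0})\bigr)}_{(c)}.
\end{align*}
Term $(c)$ is precisely the object analyzed in Theorem~\ref{thm:Welfare_intF}; with $\l\equiv1$ it gives $(c)=\frac{1}{\sqrt{n}}\sum_{i=1}^{n}\nu^{*}(X_{i},D_{i})\e_{i}+o_{p}(1)$. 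Term $(b)$ is a normalized sum of i.i.d.\ mean-zero summands $[h_{0}(X_{i})]_{+}-W(h_{0})$, which are square-integrable because $h_{0}\in\Lambda^{s}({\cal X})$ is bounded on the compact ${\cal X}$; hence $(b)=\frac{1}{\sqrt{n}}\sum_{i=1}^{n}\bigl([h_{0}(X_{i})]_{+}-W(h_{0})\bigr)$ is already in final form.

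The only genuinely new ingredient, and the step I expect to be the main obstacle, is showing $(a)=o_{p}(1)$: the empirical process $\sqrt{n}(P_{n}-P_{0})$ must be negligible at the \emph{data-dependent} function $\hat g-g$, which is correlated with the sample used to form $P_{n}$. I would handle this by stochastic equicontinuity. Since $[\cdot]_{+}$ is $1$-Lipschitz, $\|\hat g-g\|_{\infty}\le\|\hat h-h_{0}\|_{\infty}\le\|\hat\mu-\mu_{0}\|_{\infty}=o_{p}(n^{-1/4})$ by Assumption~\ref{assu:FS_quart_rate}; in particular $\|\hat g-g\|_{L^{2}(P_{0})}=o_{p}(1)$, and with probability tending to one $\hat g-g$ lies in $\mathcal{G}_{n}:=\{[\mu(\cdot,1)-\mu(\cdot,0)]_{+}-g:\ \|\mu-\mu_{0}\|_{\infty}\le\delta_{n}\}$ for a deterministic $\delta_{n}\downarrow0$. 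Because $\hat\mu$ is a linear B-spline sieve estimator whose dimension grows slowly enough to deliver the rate in Assumption~\ref{assu:FS_quart_rate}, the relevant sequence of function classes has convergent uniform entropy integrals (is $P_{0}$-Donsker) while $\mathcal{G}_{n}$ has $L^{2}(P_{0})$-radius shrinking to $0$, so asymptotic equicontinuity gives $\sup_{r\in\mathcal{G}_{n}}|\sqrt{n}(P_{n}-P_{0})r|=o_{p}(1)$, hence $(a)=o_{p}(1)$. Alternatively, a direct maximal-inequality/chaining bound over the B-spline sieve, exploiting the shrinking sup-norm radius, yields the same conclusion.

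Combining $(a)$, $(b)$, $(c)$ gives the linearization
\[
\sqrt{n}\bigl(\hat{\ol W}(\hat\mu)-\ol W(\mu_{0})\bigr)=\frac{1}{\sqrt{n}}\sum_{i=1}^{n}\Bigl([h_{0}(X_{i})]_{+}-W(h_{0})+\nu^{*}(X_{i},D_{i})\e_{i}\Bigr)+o_{p}(1).
\]
The summands $\xi_{i}:=[h_{0}(X_{i})]_{+}-W(h_{0})+\nu^{*}(X_{i},D_{i})\e_{i}$ are i.i.d.\ with mean zero (using $\E[\e_{i}\mid X_{i},D_{i}]=0$), so the classical i.i.d.\ CLT yields $\sqrt{n}(\hat{\ol W}(\hat\mu)-\ol W(\mu_{0}))\dto\cN(0,\ol\s_{W}^{2})$ with $\ol\s_{W}^{2}=\E[\xi_{i}^{2}]$. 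Finally, expanding the square, the cross term $\E\bigl[([h_{0}(X_{i})]_{+}-W(h_{0}))\,\nu^{*}(X_{i},D_{i})\e_{i}\bigr]$ vanishes by iterated expectations (conditioning on $(X_{i},D_{i})$ and using $\E[\e_{i}\mid X_{i},D_{i}]=0$), leaving $\ol\s_{W}^{2}=\mathrm{Var}([h_{0}(X_{i})]_{+})+\E[(\nu^{*}(X_{i},D_{i})\e_{i})^{2}]=\mathrm{Var}([h_{0}(X_{i})]_{+})+\s_{W}^{2}$, which is finite since $h_{0}$ is bounded on ${\cal X}$ and $\s_{W}^{2}<\infty$ as in Theorem~\ref{thm:Welfare_intF}.
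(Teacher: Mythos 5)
Your proposal is correct and follows essentially the same route as the paper's proof: the identical three-term decomposition into the known-density term handled by Theorem~\ref{thm:Welfare_intF}, the i.i.d.\ CLT term $\frac{1}{\sqrt{n}}\sum_{i}\left(\left[h_{0}(X_{i})\right]_{+}-W(h_{0})\right)$, and the stochastic-equicontinuity term, which the paper also kills via the Donsker property preserved under the Lipschitz ReLU map (your explicit use of the shrinking $L^{2}(P_{0})$ radius from $\norm{\hat{\mu}-\mu_{0}}_{\infty}=o_{p}(n^{-1/4})$ makes this step slightly more careful than the paper's statement). The final variance calculation, with the cross term vanishing by $\E\left[\rest{\e_{i}}X_{i},D_{i}\right]=0$, matches the paper exactly.
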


The standard errors can be computed similarly, based on straightforward adaptations of the analytical formula \eqref{eq:se_analytic} or the sieve-based formula \eqref{eq:se_sieve} in Section 3.1.

\section{\label{sec:ValueFunc}Estimation and Inference of the Value Functional}

\subsection{\label{subsec:ValueFunc_IntF}Value Functional Under Known
Density $f$}

We now turn to the general value functional given by 
\begin{align*}
V\left(h_{0}\right) & :=\int\ind\left\{ h_{0}\left(x\right)\geq0\right\} v_0\left(x\right)f\left(x\right)dx\\
\equiv\ol V\left(\mu_{0}\right) & :=\int\ind\left\{ \mu_{0}\left(x,1\right)-\mu_{0}\left(x,0\right)\geq0\right\} v_0\left(x\right)f\left(x\right)dx.
\end{align*}
The simple plug-in estimators are defined as
\begin{align*}
V\left(\hat{h}\right) & :=\int\ind\left\{ \hat{h}\left(x\right)\geq0\right\} v_0\left(x\right)f\left(x\right)dx\\
\equiv\ol V\left(\hat{\mu}\right) & :=\int\ind\left\{ \hat{\mu}\left(x,1\right)-\hat{\mu}\left(x,0\right)\geq0\right\} v_0\left(x\right)f\left(x\right)dx.
\end{align*}
Under Assumption \ref{assu:main} and \ref{assu:main-functional}(b)(c) the functional
derivative of $\ol V\left(\mu_{0}\right)\left[\nu\right]$ is given by 
\begin{align*}
D_\mu \ol V\left(\mu_{0}\right)\left[\nu\right] & :=\int_{\left\{x\in\R^d: h_{0}\left(x\right)=0\right\} }\frac{\left(\nu\left(x,1\right)-\nu\left(x,0\right)\right)}{\norm{\Dif_{x} h_{0}\left(x\right)}}v_0\left(x\right)f\left(x\right)d{\cal H}^{d-1}\left(x\right).
\end{align*}
As explained in Section \ref{sec:RootN}, $V\left(h_{0}\right)\equiv \ol V\left(\mu_{0}\right)$ is
generally not $\sqrt{n}$-estimable, in particular, the $D_\mu \ol V\left(\mu_{0}\right)\left[\nu\right]$ are not bounded (or continuous) linear functionals on $L^2(f)$, which means that they do not have a Riesz representer on the whole Hilbert space $L^2(f)$. Nevertheless, sieve Riesz representer is well-defined (see \citet*{chen2025semiparametric}), which will be the key ingredient for the sieve variance term for the properties of $\ol V\left(\hat{\mu}\right)-\ol V\left(\mu_{0}\right)$.
In particular we need to study
the asymptotic property of the plug-in estimator of the submanifold
integral of form \eqref{eq:Dh_Vh_form}, which has been studied in
\citet*{chen2025semiparametric} using linear series (Bspline) first
stage: in particular, Section 4.3 of \citet*{chen2025semiparametric}
analyzes the integral on upper contour set of the form $V\left(h_{0}\right)$
specifically. We use the result in \citet*{chen2025semiparametric}
without repeating it here, but focus on the adaptation required for
the standard error computation.
\begin{assumption}
\label{assu:Asym_Vh}Suppose that: 
\begin{itemize}
\item[(a)] $\norm{\Dif_x^{2}h_{0}\left(x\right)}\leq M<\infty$. 
\item[(b)] $\norm{\hat{\mu}-\mu_{0}}_{\infty}\norm{\Dif\left(\hat{\mu}-\mu_{0}\right)}_{\infty}=o_{p}\left(\sqrt{\frac{1}{n}K_{n}^{\frac{1}{d}}}\right).$
\end{itemize}
\end{assumption}

\begin{thm}
\label{thm:Vfunc_IntF} Suppose that Assumptions \ref{assu:main} and \ref{assu:main-functional}(b)(c) hold. Let $\hat{\mu}$ be a linear sieve estimator of $\mu_{0}$ and suppose that Assumptions 6, 8 and 11 in \citet*{chen2025semiparametric}
hold along with Assumption \ref{assu:Asym_Vh} above. Then:
\[
\frac{\sqrt{n}\left(\ol V\left(\hat{\mu}\right)-\ol V\left(\mu_{0}\right)\right)}{\s_{V,n}}\dto\cN\left(0,1\right),~~~\text{with }\s_{V,n}^{2}\asymp K_{n}^{\frac{1}{d}}
\]
\end{thm}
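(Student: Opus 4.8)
The plan is to linearize $\ol V(\hat{\mu})-\ol V(\mu_{0})$ around $\mu_{0}$ via the generalized Leibniz rule, to represent the resulting linear functional through the sieve Riesz representer attached to the stacked basis of \eqref{Sieve_OLS}, and to close the argument with a triangular-array central limit theorem; throughout, the technical heavy lifting for submanifold integrals is borrowed from \citet*{chen2025semiparametric} (its Section 4.3 in particular) rather than redone.

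\textbf{Step 1 (linearization and the second-order remainder).} Since $\ol V$ depends on $\mu_{0}$ only through the upper contour set $\O_{0}=\{x:h_{0}(x)\geq0\}$, I would write $\ol V(\hat{\mu})-\ol V(\mu_{0})=\int_{\hat{\O}\setminus\O_{0}}v_{0}f-\int_{\O_{0}\setminus\hat{\O}}v_{0}f$ with $\hat{\O}:=\{x:\hat{h}(x)\geq0\}$, and expand along the homotopy $\hat{h}_{t}:=h_{0}+t(\hat{h}-h_{0})$ using \eqref{eq:Reynolds} and the coarea/layer-cake representation of the symmetric-difference volume. The first-order term is exactly $D_{\mu}\ol V(\mu_{0})[\hat{\mu}-\mu_{0}]$, the Hausdorff integral over $\p\O_{0}=\{x:h_{0}(x)=0\}$ displayed just before the theorem. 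Assumption \ref{assu:main-functional}(c) makes $\p\O_{0}$ a regular $(d-1)$-submanifold, so the tube $\{|h_{0}|\leq\d\}$ has width of order $\d$ and the implicit function theorem locates $\{\hat{h}=0\}$ within $O(\norm{\hat{h}-h_{0}}_{\infty})$ of $\p\O_{0}$; Assumption \ref{assu:Asym_Vh}(a) then bounds the curvature correction in the area element. A normal-coordinate Taylor expansion shows the remainder is dominated by $\norm{\hat{\mu}-\mu_{0}}_{\infty}\norm{\Dif(\hat{\mu}-\mu_{0})}_{\infty}$ (plus an $O(M\norm{\hat{\mu}-\mu_{0}}_{\infty}^{2})$ Hessian term of smaller order), which by Assumption \ref{assu:Asym_Vh}(b) is $o_{p}(\sqrt{K_{n}^{1/d}/n})$. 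I expect this to be the main obstacle: it is where one must control uniformly how the estimated level set tracks the true boundary, and it is precisely the content abstracted in \citet*{chen2025semiparametric}, which I would cite rather than reprove.

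\textbf{Step 2 (sieve Riesz representation).} The map $\nu\mapsto D_{\mu}\ol V(\mu_{0})[\nu]$ is unbounded on $L^{2}(f)$ (it reads $\mu_{0}$ off a Lebesgue-null set, hence has no Riesz representer there), but it is bounded on the finite-dimensional sieve space, so it admits a sieve Riesz representer of the form $\ol{\psi}(\cdot)'\pi_{n}^{*}$ with $\pi_{n}^{*}$ solving the associated least-norm problem. Standard sieve-Riesz bounds (eigenvalue control of the Gram matrix of $\ol{\psi}$) together with the $(d-1)$-dimensional nature of the boundary integral give $\s_{V,n}^{2}:=\norm{\ol{\psi}'\pi_{n}^{*}}_{2,f}^{2}\asymp K_{n}^{1/d}$ --- intuitively, integrating out $d-1$ of the $d$ coordinates reduces the functional to a one-dimensional point-evaluation, whose sieve representation costs of order $K_{n}^{1/d}$ basis functions; this delivers the rate claimed. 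Replacing $\hat{\mu}-\mu_{0}$ by its sieve projection plus approximation error and invoking the first-order condition of the OLS in \eqref{Sieve_OLS}, the linear term equals $\tfrac1n\sum_{i}(\ol{\psi}(X_{i},D_{i})'\pi_{n}^{*})u_{i}$ up to a deterministic bias and a stochastic-equicontinuity remainder, both $o_{p}(\s_{V,n}/\sqrt{n})$ by the H\"older smoothness $s>1$ in Assumption \ref{assu:main}(c) and the undersmoothing built into Assumptions 6, 8 and 11 of \citet*{chen2025semiparametric}.

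\textbf{Step 3 (CLT).} Finally I would apply a Lyapunov CLT to the i.i.d.\ triangular array $\tfrac{1}{\sqrt{n}\,\s_{V,n}}\sum_{i=1}^{n}(\ol{\psi}(X_{i},D_{i})'\pi_{n}^{*})u_{i}$: the summands have conditional mean zero by $\E[u_{i}\mid X_{i},D_{i}]=0$, variance tending to one after normalization, and the Lyapunov ratio vanishes because $\norm{\ol{\psi}'\pi_{n}^{*}}_{\infty}/\s_{V,n}\to0$, which follows from the usual linear-sieve bound on $\sup_{x}\norm{\ol{\psi}(x)}$ and from $\s_{V,n}^{2}\asymp K_{n}^{1/d}$. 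Collecting Steps 1--3 gives $\sqrt{n}(\ol V(\hat{\mu})-\ol V(\mu_{0}))/\s_{V,n}=\tfrac{1}{\sqrt{n}\,\s_{V,n}}\sum_{i}(\ol{\psi}(X_{i},D_{i})'\pi_{n}^{*})u_{i}+o_{p}(1)\dto\cN(0,1)$ with $\s_{V,n}^{2}\asymp K_{n}^{1/d}$, as asserted.
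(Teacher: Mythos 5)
Your proposal is correct and follows essentially the same route as the paper: the paper likewise computes the directional derivative as the Hausdorff integral over $\{x:h_{0}(x)=0\}$ and then simply invokes Theorem 4 and Proposition 2 of \citet*{chen2025semiparametric}, whose content---linearization with the remainder controlled via Assumption \ref{assu:Asym_Vh}, the sieve Riesz representer with $\s_{V,n}^{2}\asymp K_{n}^{1/d}$, and a triangular-array CLT---is exactly what you sketch in Steps 1--3. One small correction to Step 3: the Lyapunov ratio that must vanish is $\norm{\ol{\psi}'\pi_{n}^{*}}_{\infty}/\left(\sqrt{n}\,\s_{V,n}\right)$ rather than $\norm{\ol{\psi}'\pi_{n}^{*}}_{\infty}/\s_{V,n}$; for irregular functionals the sup norm of the sieve Riesz representer typically grows \emph{faster} than its $L^{2}$ norm (as with point evaluation), so your stated sufficient condition would generally fail, while the correct weaker ratio does vanish under the rate conditions imposed through the cited assumptions.
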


The standard error estimates can be computed based on the linear sieve first stage in a way similar to that described in Section 3.1, with the following adaptions. Again, we use the formula 
\begin{equation}\label{eq:sigma_v_est}
\hat{\s}_{V}^{2}:=\hat{D}_\mu \overline{V}\left(\hat{\mu}\right)\left[\overline{\psi}\right]'\hat{\O}\hat{D}_\mu \overline{V} \left(\hat{\mu}\right)\left[\overline{\psi}\right],
\end{equation}
where the pathwise derivative $D_\mu \ol{V}\left(\hat{\mu}\right)\left[\ol{\psi}\right]$ given by 
\[
D_\mu \ol{V}\left(\hat{\mu}\right)\left[\ol{\psi}\right] = 
\begin{pmatrix}
\int_{\{x \in {\cal X} : \hat{h}(x) = 0 \}} 
\frac{\psi^{(K_1)}(x)}{\lVert \nabla_x h_{0}(x) \rVert}
\,v_0(x) f(x) \, d\mathcal{H}^{d - 1}(x)\\
-\int_{\{x \in {\cal X} : \hat{h}(x) = 0 \}} 
\frac{\psi^{(K_0)}(x)}{\lVert \nabla_x h_{0}(x) \rVert}
\, v_0(x)f(x) \, d\mathcal{H}^{d - 1}(x)
\end{pmatrix},
\]
can be approximated via
\begin{equation}\label{eq:D_mu_V}
\hat{D}_\mu\overline{V}\left(\hat{\mu}\right)\left[\ol{\psi}\right] =
\begin{pmatrix}
\frac{1}{2\e}\int_{\{x \in {\cal X} :-\e < \hat{h}(x) < \e\}} 
\,\psi^{(K_1)}(x) v_0 (x) f(x) \, dx    \\
-\frac{1}{2\e}\int_{ \{x \in {\cal X}:-\e < \hat{h}(x) < \e\}} 
\,\psi^{(K_0)}(x) v_0 (x) f(x) \, dx 
\end{pmatrix},
\end{equation}
based on the mathematical result\footnote{See Theorem 3.13.(iii) of \citet*{evans2015measure}.}
that 
\[
\lim_{\e\downto0}\frac{1}{2\e}\int_{\left\{ x\in{\cal X}:~-\e<h\left(x\right)<\e\right\} }\omega\left(x\right)dx=\int_{\left\{ x\in{\cal X}:~h\left(x\right)=0\right\} }\frac{\omega\left(x\right)}{\norm{\Dif_{x}h\left(x\right)}}d{\cal H}^{d-1}\left(x\right).
\]
Again, we use Sobol points for numerical integration. See the simulation section for details, as well as robustness checks with respect the choice of $\e$ in the numerical differentiation step.

\subsection{\label{subsec:ValueFunc_mean}Value Functional Under 
Unknown Density $f=f_{0}$}

We now consider the case where $F=F_{0}$ and population expectation
$\E[\cdot]$ is estimated by the sample average $\frac{1}{n}\sum_{i=1}^{n}[\cdot]$
, and seek to characterize the asymptotic behavior of the natural
plug-in estimator of $V\left(h_{0}\right)\equiv\ol V\left(\mu_{0}\right)$
is given by
\begin{align*}
\hat{V}\left(\hat{h}\right) & :=\frac{1}{n}\sum_{i=1}^{n}\ind\left\{ \hat{h}\left(X_{i}\right)\geq0\right\} v_{0}\left(X_{i}\right)\\
\equiv\hat{\ol V}\left(\hat{\mu}\right) & :=\frac{1}{n}\sum_{i=1}^{n}\ind\left\{ \hat{\mu}\left(X_{i},1\right)-\hat{\mu}\left(X_{i},0\right)\geq0\right\} v_{0}\left(X_{i}\right).
\end{align*}

\noindent It turns out that the additional error in the approximation
of $V\left(\hat{h}\right)$ by $\hat{V}\left(\hat{h}\right)$ is asymptotically negligible relative to $V\left(\hat{h}\right)-V(h_0)$,
which converges at a slower-than-$\sqrt{n}$ rate.
\begin{thm}
\label{thm:Vfunc_Mean}The asymptotic distribution of $\hat{V}\left(\hat{h}\right)$ coincides with that $V(\hat{h})$ in Theorem \ref{thm:Vfunc_IntF}.
\end{thm}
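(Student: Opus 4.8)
\emph{Proof sketch.} The plan is to split the estimation error of the sample-average plug-in estimator into the known-density estimation error already characterized in Theorem~\ref{thm:Vfunc_IntF}, plus two terms that vanish after rescaling by $\s_{V,n}$. Since here $f=f_{0}$, we have $\ol V(\hat\mu)\equiv V(\hat h)$ and $V(h_0)=\int\ind\{h_0(x)\ge0\}v_0(x)f_0(x)\,dx=\E[\ind\{h_0(X_i)\ge0\}v_0(X_i)]$, so adding and subtracting gives the exact identity
\begin{align}
\hat V(\hat h)-V(h_0)
&= \big(\ol V(\hat\mu)-\ol V(\mu_0)\big)
+ \Big(\tfrac1n\textstyle\sum_{i=1}^{n}\ind\{h_0(X_i)\ge0\}v_0(X_i)-V(h_0)\Big)
+ R_n,\label{eq:Vmean_decomp}
\end{align}
where $R_n:=\tfrac1n\sum_{i=1}^{n}\big(\ind\{\hat h(X_i)\ge0\}-\ind\{h_0(X_i)\ge0\}\big)v_0(X_i)-\int\big(\ind\{\hat h(x)\ge0\}-\ind\{h_0(x)\ge0\}\big)v_0(x)f_0(x)\,dx$. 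By Theorem~\ref{thm:Vfunc_IntF}, $\sqrt n\big(\ol V(\hat\mu)-\ol V(\mu_0)\big)/\s_{V,n}\dto\cN(0,1)$ with $\s_{V,n}^2\asymp K_n^{1/d}\to\infty$; hence by Slutsky's theorem it suffices to show that the last two terms of \eqref{eq:Vmean_decomp}, multiplied by $\sqrt n/\s_{V,n}$, are $o_p(1)$.

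For the middle term, $x\mapsto\ind\{h_0(x)\ge0\}v_0(x)$ is a fixed bounded function on the bounded set ${\cal X}$, so the i.i.d.\ central limit theorem gives it the sample-mean rate $O_p(n^{-1/2})$; therefore $\sqrt n/\s_{V,n}$ times this term is $O_p(\s_{V,n}^{-1})=o_p(1)$ because $\s_{V,n}\to\infty$. The real work is thus to show $R_n=o_p(\s_{V,n}n^{-1/2})=o_p(\sqrt{K_n^{1/d}/n})$, which is a stochastic-equicontinuity statement for the centered empirical process indexed by $\hat h$ in a shrinking sup-norm neighborhood of $h_0$.

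Two ingredients drive the bound on $R_n$. First, under Assumption~\ref{assu:main} and the maintained first-stage sieve rate, $\norm{\hat h-h_0}_\infty=o_p(1)$, and on the event $\{\norm{\hat h-h_0}_\infty\le\delta_n\}$, which has probability tending to one with $\delta_n\to0$, the symmetric difference $\{x:\hat h(x)\ge0\}\triangle\{x:h_0(x)\ge0\}$ is contained in the slab $\{x:\abs{h_0(x)}\le\delta_n\}$; by the regular-level-set condition in Assumption~\ref{assu:main-functional}(c) this slab has $f_0$-measure $O(\delta_n)$, so the summand/integrand defining $R_n$ is dominated by a constant times $\ind\{\abs{h_0(x)}\le\delta_n\}$ and has population second moment $O(\delta_n)$. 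Second, for a B-spline first stage the class of candidate level sets $\{x\mapsto\ind\{\psi^{(K_n)}(x)'\b\ge0\}\}$ is contained in a VC-type class with uniform covering numbers growing only polynomially in $K_n$. Combining these via a localized maximal (Talagrand-type) inequality—or directly via the sieve empirical-process bounds invoked from \citet*{chen2025semiparametric}—yields, up to logarithmic factors, $R_n=O_p\big(\sqrt{\delta_n K_n(\log n)/n}\big)$. Since $\delta_n\to0$ while $K_n^{1/d}\to\infty$, the rate restrictions maintained in Assumptions~6, 8 and 11 of \citet*{chen2025semiparametric} together with Assumption~\ref{assu:Asym_Vh} ensure $\delta_n K_n\log n=o(K_n^{1/d})$, so $R_n=o_p(\sqrt{K_n^{1/d}/n})$; assembling the three pieces of \eqref{eq:Vmean_decomp} then gives $\sqrt n(\hat V(\hat h)-V(h_0))/\s_{V,n}\dto\cN(0,1)$, the same limit as for $V(\hat h)$.

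The main obstacle is this last step: one must trade off the shrinking second moment of the indicator-difference class—governed by the first-stage uniform rate $\delta_n$—against the growing complexity of the sieve-indexed level sets—governed by $K_n$. The crucial cushion is that $\s_{V,n}\to\infty$, so the required bound $o_p(\sqrt{K_n^{1/d}/n})$ is an order of magnitude weaker than the usual parametric $o_p(n^{-1/2})$; this slack is precisely what makes the sample-average approximation error asymptotically negligible relative to the slower-than-$\sqrt n$ estimation error $V(\hat h)-V(h_0)$, giving the claimed coincidence of limiting distributions.
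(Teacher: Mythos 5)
Your proposal follows essentially the same route as the paper's proof: the identical three-term empirical-process decomposition (the known-density term $\ol V(\hat\mu)-\ol V(\mu_0)$ handled by Theorem \ref{thm:Vfunc_IntF}, the CLT term $\GG_n g_{\mu_0}$ rendered negligible because $\s_{V,n}\to\infty$, and the centered indicator-difference term), with negligibility of the last term obtained exactly as in the paper from the shrinking envelope $\ind\{|h_0(x)|\le 2\delta_n\}$ whose second moment is $O(\delta_n)$ under the regular-level-set condition, plugged into a maximal inequality and compared against the target rate $o_p(\sqrt{K_n^{1/d}/n})$. The only divergence is bookkeeping of entropy: the paper simply posits a finite uniform entropy integral for the class $\{g_\mu-g_{\mu_0}:\norm{\mu-\mu_0}_\infty\le a_n\}$, giving the bound $O_p(\sqrt{a_n})$, whereas you control the sieve-generated level sets as a VC-type class of dimension $O(K_n)$ and absorb the resulting $\sqrt{K_n\log n}$ factor into an asserted rate condition on $\delta_n$ and $K_n$ --- a minor variation within the same argument rather than a different approach.
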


The standard error can be computed using formula \eqref{eq:sigma_v_est}, with $\hat{D}_\mu\overline{V}\left(\hat{\mu}\right)\left[\ol{\psi}\right]$ given by the following adapted estimator,
\begin{equation}\label{eq:D_mu_V_fhat}
\hat{D}_\mu\overline{V}\left(\hat{\mu}\right)\left[\ol{\psi}\right] =
\begin{pmatrix}
\frac{1}{2\e}\int_{\{x \in {\cal X} :-\e < \hat{h}(x) < \e\}} 
\,\psi^{(K_1)}(x) v_0 (x) \hat{f}(x) \, dx    \\
-\frac{1}{2\e}\int_{ \{x \in {\cal X}:-\e < \hat{h}(x) < \e\}} 
\,\psi^{(K_0)}(x) v_0 (x) \hat{f}(x) \, dx 
\end{pmatrix},
\end{equation}
where $\hat{f}(x)$ is a nonparametric density estimator of $f_0(x)$. We then again use Sobol points to numerically evaluate the integral in \eqref{eq:D_mu_V_fhat}.

\begin{Remark}\label{rem:int_Sobol_fhat}
Even though integrals of the form $\int w(x) f_0(x) dx$ can be estimated using the sample average $\frac{1}{N}\sum_iw(X_i)$ without the need of a nonparametric density estimator, we choose instead to estimate it using $\int w(x) \hat{f}(x) dx$  using the nonparametric density estimator $\hat{f}(x)$ together with numerical integration based on Sobol points in \eqref{eq:D_mu_V_fhat}, because we need to compute the integral in \eqref{eq:D_mu_V_fhat} on a small constrained domain $\{-\e < \hat{h}(x)<\e\}$ for numerical differentiation. As a result, given the empirical data $(X_i)_{i=1}^N$, for small choices of $\e$, there might be very few, or even no, data points that fall into the small band, making the sample average $\frac{1}{N}\sum_iw(X_i)\ind\{-\e < \hat{h}(X_i)<\e\}$ very discrete (and even identically zero) for small values of $\e$. The use of a nonparametric density estimator $\hat{f}(x)$, along with numerical integration, effectively smooths out such discreteness and results in much better numerical approximation of the derivative.  
\end{Remark}

\section{Simulations}\label{sec:Sim}

\subsection{Results for Theorem \ref{thm:Welfare_intF}} \label{Thm1_Sim}

We first report the finite-sample performance of the semiparametric estimator, its associated standard error estimator, and the resulting confidence interval, based on the theoretical results in Theorem~\ref{thm:Welfare_intF}, which concern the welfare functional under a known target distribution. The model specifications used in the simulations are summarized in Table~\ref{tab:Thm1_model_specs}. For each specification, random samples of size $n$ are drawn with covariates $X_i \sim F_0$, and treatment status is assigned according to the propensity score function $p_0(X_i)$. Outcomes are then generated as $Y_i = \mu_0(X_i, D_i) + \epsilon_i$, with $\epsilon_i \sim N(0,1)$.

\begin{table}[!htbp]
\centering
\caption{Theorem 1: Model Specifications}
\label{tab:Thm1_model_specs}
\scriptsize
\begin{tabular}{@{}lccccc@{}}
\toprule
Model & \textbf{$F_0$} & \textbf{$F$} & \textbf{$\mu_0(x,d)$} & \textbf{$p_0(x)$} & \textbf{$\sigma^2$} \\
\midrule
M1 & $\text{U}[-0.2,1.2]$ & $\text{U}[0,1]$ & 
$\begin{aligned}5\sin(2\pi x)\cos(2\pi x)\\+ d(-0.4+2x^2)\end{aligned}$ & 
$\frac{1}{1+e^{-(1-2x)}}$ & 1 \\
\addlinespace
M2 & $\text{U}[-0.2,1.2]$ & $\text{U}[0,1]$ & 
$0.5|x|+d(0.5-x^2)$ & 
$\frac{1}{1+e^{-(-0.5+x)}}$ & 1 \\
\addlinespace
M3 & $\text{U}[-0.2,1.2]$ & $\text{U}[0,1]$ & 
$x^2+d(1-x)$ & 
$\frac{1}{1+e^{-(0.5-x)}}$ & 1 \\
\addlinespace
M4 & $\text{U}[-0.2,1.2]^2$ & $\text{U}[0,1]^2$ & 
$\begin{aligned}(1-x_1^2-x_2^2)(4+\sin x_1x_2+\cos x_2)\\+d(0.5x_1-0.4x_2)\end{aligned}$ & 
$\frac{1}{1+e^{-(x_1-x_2)}}$ & 1 \\
\addlinespace
M5 & $\text{U}[-0.2,1.2]^2$ & $\text{U}[0,1]^2$ & 
$\begin{aligned}(1-x_1x_2)(3+\sin(\pi x_1)\cos(\pi x_2))\\+d(0.3x_1-0.3x_2)\end{aligned}$ & 
$\frac{1}{1+e^{-(x_1-x_2)}}$ & 1 \\
\addlinespace
M6 & $\text{U}[-0.2,1.2]^2$ & $\text{U}[0,1]^2$ & 
$\log(1+x_1+x_2)+d(x_1-0.7x_2)$ & 
$\frac{1}{1+e^{-(1.5x_1-0.5x_2)}}$ & 1 \\
\addlinespace
M7 & $\text{U}[-0.2,1.2]^2$ & $\text{U}[0,1]^2$ & 
$\begin{aligned}(x_1^2+x_2^2)e^{-(x_1+x_2)}\\+d(0.5-x_2)\end{aligned}$ & 
$\frac{1}{1+e^{-(-0.5+x_1+2x_2)}}$ & 1 \\
\bottomrule
\end{tabular}
\end{table}

The semiparametric estimator of the welfare functional is constructed in two steps. In the first step, $\mu_0(x,1)$ and $\mu_0(x,0)$ are estimated separately for the treated group ($D_i = 1$) and the control group ($D_i = 0$) using B-spline regressions. In the second step, the welfare functional is approximated by numerically integrating over $M = 5{,}000$ Sobol points $\{X_j^{Sobol}\}_{j=1}^M$ drawn from the target distribution $F$:
\[
\hat{\overline{W}}(\hat{\mu}) \;=\; \frac{1}{M} \sum_{j=1}^M \big[\hat{\mu}(X_j^{Sobol},1) - \hat{\mu}(X_j^{Sobol},0)\big]_+,
\]
where $\hat{\mu}(x,1)$ and $\hat{\mu}(x,0)$ denote the first-stage nonparametric estimators. 

Let $\hat{h}(x) = \hat{\mu}(x,1) - \hat{\mu}(x,0)$ denote the estimated CATE function, and let $\hat{p}(x)$ be a B-spline sieve estimator of the propensity score. The $95\%$ confidence interval for the welfare functional takes the usual form,
\[
\left[\hat{\overline{W}}(\hat{\mu}) - 1.96 \frac{\hat{\sigma}_W}{\sqrt{n}}, \;\;  \hat{\overline{W}}(\hat{\mu}) + 1.96 \frac{\hat{\sigma}_W}{\sqrt{n}}\right],
\]
with
\[
\hat{\s}_{W}^{2} =\frac{1}{N}\sum_{i}\frac{\ind\left\{ \hat{h}\left(X_{i}\right)\geq0\right\} \l^{2}\left(X_{i}\right)(Y_i -\hat{h}(X_i))^2}{\hat{p}\left(X_{i}\right)\left(1-\hat{p}\left(X_{i}\right)\right)}.
\]

To evaluate performance, we simulate each model 2,000 times at sample sizes $n = 1500, 3000,$ and $6000$. Table~\ref{tab:Thm1_Sim_Result} reports the true welfare functional ($W_{\text{true}}$), the average bias of the estimator $\hat{\overline{W}}(\hat{\mu})$ (Bias), its sampling standard deviation (SD), the average estimated standard error (SE), the standard deviation of SE across iterations (SD(SE)), and the empirical coverage probability of the associated 95\% confidence interval (Coverage). The results\footnote{Additional simulations based on GAM are presented in the Appendix \ref{tab:Thm1_Sim_GAM}.} indicate that the nominal coverage rate is attained in nearly all cases, even with relatively small samples, and improves further as sample size increases. The bias of the estimator also becomes negligible relative to its sampling variability, and the proposed SE estimator closely tracks the sampling standard deviation with high precision.

\begin{table}[!htbp]
\centering
\caption{Theorem 1: Simulation Results}
\label{tab:Thm1_Sim_Result}
\begin{tabular}{lccccccc}
\toprule
Model & $n$ & $W_{\text{true}}$ & Bias & SD & SE & SD(SE) & Coverage \\
\midrule
M1 & 1500 & 0.3857 & 0.0161 & 0.0470 & 0.0480 & 0.0033 & 0.9415 \\
   & 3000 & 0.3857 & 0.0080 & 0.0338 & 0.0337 & 0.0018 & 0.9480 \\
   & 6000 & 0.3857 & 0.0037 & 0.0229 & 0.0237 & 0.0010 & 0.9535 \\
\midrule
M2 & 1500 & 0.2358 & 0.0030 & 0.0484 & 0.0518 & 0.0031 & 0.9615 \\
   & 3000 & 0.2358 & 0.0028 & 0.0352 & 0.0366 & 0.0015 & 0.9555 \\
   & 6000 & 0.2358 & 0.0009 & 0.0243 & 0.0259 & 0.0007 & 0.9630 \\
\midrule
M3 & 1500 & 0.5001 & 0.0046 & 0.0579 & 0.0605 & 0.0025 & 0.9620 \\
   & 3000 & 0.5001 & 0.0025 & 0.0407 & 0.0430 & 0.0013 & 0.9595 \\
   & 6000 & 0.5001 & 0.0000 & 0.0293 & 0.0305 & 0.0006 & 0.9625 \\
\midrule
M4 & 1500 & 0.1033 & 0.0185 & 0.0478 & 0.0561 & 0.0074 & 0.9735 \\
   & 3000 & 0.1033 & 0.0088 & 0.0348 & 0.0400 & 0.0040 & 0.9745 \\
   & 6000 & 0.1033 & 0.0043 & 0.0248 & 0.0284 & 0.0021 & 0.9695 \\
\midrule
M5 & 1500 & 0.0499 & 0.0282 & 0.0418 & 0.0521 & 0.0093 & 0.9700 \\
   & 3000 & 0.0499 & 0.0158 & 0.0303 & 0.0369 & 0.0056 & 0.9720 \\
   & 6000 & 0.0499 & 0.0094 & 0.0227 & 0.0262 & 0.0032 & 0.9660 \\
\midrule
M6 & 1500 & 0.2315 & 0.0268 & 0.0559 & 0.0622 & 0.0048 & 0.9550 \\
   & 3000 & 0.2315 & 0.0123 & 0.0394 & 0.0444 & 0.0025 & 0.9655 \\
   & 6000 & 0.2315 & 0.0050 & 0.0288 & 0.0315 & 0.0014 & 0.9695 \\
\midrule
M7 & 1500 & 0.1250 & 0.0462 & 0.0505 & 0.0595 & 0.0073 & 0.9370 \\
   & 3000 & 0.1250 & 0.0218 & 0.0346 & 0.0406 & 0.0043 & 0.9610 \\
   & 6000 & 0.1250 & 0.0101 & 0.0245 & 0.0278 & 0.0022 & 0.9680 \\
\bottomrule
\end{tabular}

\begin{flushleft}
\footnotesize \textit{Notes:} (1) $W_{\text{true}}$ denotes the true welfare functional, computed numerically using $5{,}000$ Sobol points drawn from $F$.  
(2) Bias is the average deviation from $W_{\text{true}}$.  
(3) SD is the sampling standard deviation.  
(4) SE is the average estimated asymptotic standard error of $\hat{\overline{W}}(\hat{\mu})$.  
(5) SD(SE) is the standard deviation of SE across iterations.  
(6) Coverage is the empirical coverage probability of the 95\% CI.  
\end{flushleft}
\end{table}


\begin{Remark}
    To improve computational efficiency, we predetermine the sieve dimensions for estimating $\mu_0(x,0)$ and $\mu_0(x,1)$. For each model specification, we first generate a dataset ${(Y_i, X_i, D_i)}_{i=1}^{n=6000}$ and apply an adaptation the sieve dimension selection procedure of \citet*{chen2025adaptive} separately to the treated and control groups.\footnote{The adaptation selects a larger sieve dimension than that selected by the CCK procedure (and the npiv R package) to achieve undersmoothing.} As demonstrated in their paper, this approach ensures that the resulting estimators of $\mu_0(x,1)$ and $\mu_0(x,0)$ converge at the fastest possible (i.e., minimax) rates in the sup-norm. The selected sieve dimensions are then used in the simulation designs with $n = 1500, 3000, 6000$. In principle, one could implement data-driven dimension selection within each simulation iteration, but doing so would substantially increase computational cost.
\end{Remark}

\begin{Remark}
    The construction of the $95\%$ confidence interval  requires estimating the propensity score function $p_0(x)$. To this end, we again use the B-spline sieve estimator, regressing treatment status on the covariates. The sieve dimension is similarly predetermined using an adaptation of \citet*{chen2025adaptive} predetermined based on the full dataset. A potential concern is that the fitted propensity score $\hat{p}(x)$ may take values outside the unit interval for some observations, which is likely to indicate a violation of the strict overlap assumption. Accordingly, when computing the asymptotic standard deviation $\hat{\sigma}_W$, we trim observations with estimated propensity scores lying outside $[0,1]$.
\end{Remark}

\subsubsection{Sieve Variance Estimation}

One potential drawback of the plug-in approach to estimating the asymptotic variance of the welfare functional is that it requires nonparametric estimation of the propensity score function $p_0(x)$, which can introduce additional sampling noise and numerical instability. As an alternative, one may employ the sieve variance estimator as in \cite*{chen2014sieveIrregular} or \citet*{chen2015sieve}, whose formula depends only on the pathwise derivative of the welfare functional and a linear regression of the outcome variable on the sieve basis. When the target distribution $F$ is known, the pathwise derivative of the welfare functional can be numerically approximated using Sobol points drawn from the target population combined with importance sampling. When the target distribution is unknown, the pathwise derivative can instead be evaluated directly at the observed data and approximated by a sample average. Simulation results and the empirical application based on the sieve variance estimator are shown in the Appendix \ref{SieveVar}.

\subsection{Results for Theorem \ref{thm:Welfare_mean}}
We now investigate the finite-sample performance of our estimation and inference procedure for the welfare functional in the case where the target distribution $F$ coincides with the population distribution $F_0$, though both remain unknown. The relevant model specifications are presented in Table~\ref{tab:Thm2_model_specs}. In contrast to the designs considered in Section~\ref{Thm1_Sim}, these specifications explicitly impose that $F$ and $F_0$ share identical supports.

\begin{table}[!htbp]
\centering
\caption{Theorem 2: Model Specifications}
\label{tab:Thm2_model_specs}
\scriptsize
\begin{tabular}{@{}lccccc@{}}
\toprule
Model & \textbf{$F_0$} & \textbf{$F$} & \textbf{$\mu_0(x,d)$} & \textbf{$p_0(x)$} & \textbf{$\sigma^2$} \\
\midrule
M8 & $\text{U}[0,1]$ & $\text{U}[0,1]$ &
$\begin{aligned}5\sin(2\pi x)\cos(2\pi x) \\+ d(-0.4+2x^2)\end{aligned}$ &
$\frac{1}{1+e^{-(1-2x)}}$ & 1 \\
\addlinespace
M9 & $\text{U}[0,1]$ & $\text{U}[0,1]$ &
$0.5|x|+d(0.5-x^2)$ &
$\frac{1}{1+e^{-(-0.5+x)}}$ & 1 \\
\addlinespace
M10 & $\text{U}[0,1]$ & $\text{U}[0,1]$ &
$x^2+d(1-x)$ &
$\frac{1}{1+e^{-(0.5-x)}}$ & 1 \\
\addlinespace
M11 & $\text{U}[0,1]^2$ & $\text{U}[0,1]^2$ &
$\begin{aligned}(1-x_1^2-x_2^2)(4+\sin x_1x_2+\cos x_2) \\+ d(0.5x_1-0.4x_2)\end{aligned}$ &
$\frac{1}{1+e^{-(x_1-x_2)}}$ & 1 \\
\addlinespace
M12 & $\text{U}[0,1]^2$ & $\text{U}[0,1]^2$ &
$\begin{aligned}(1-x_1x_2)(3+\sin(\pi x_1)\cos(\pi x_2)) \\+ d(0.3x_1-0.3x_2)\end{aligned}$ &
$\frac{1}{1+e^{-(x_1-x_2)}}$ & 1 \\
\addlinespace
M13 & $\text{U}[0,1]^2$ & $\text{U}[0,1]^2$ &
$\log(1+x_1+x_2)+d(x_1-0.7x_2)$ &
$\frac{1}{1+e^{-(1.5x_1-0.5x_2)}}$ & 1 \\
\addlinespace
M14 & $\text{U}[0,1]^2$ & $\text{U}[0,1]^2$ &
$\begin{aligned}(x_1^2+x_2^2)e^{-(x_1+x_2)} \\+ d(0.5-x_2)\end{aligned}$ &
$\frac{1}{1+e^{-(-0.5+x_1+2x_2)}}$ & 1 \\
\bottomrule
\end{tabular}
\end{table}

The plug-in estimator $\hat{\overline{W}}(\hat{\mu})$ proposed here differs from that in Section~\ref{Thm1_Sim} only in the second step: instead of using Sobol points to approximate the integral, we take the sample average of $\big[\hat{\mu}(X_i,1) - \hat{\mu}(X_i,0)\big]_+$ over the observed data:
\[
\hat{\overline{W}}(\hat{\mu}) \;=\; \frac{1}{n}\sum_{i=1}^n \big[\hat{\mu}(X_i,1) - \hat{\mu}(X_i,0)\big]_+.
\]

Relative to the known-$F$ case, the asymptotic variance of $\hat{\overline{W}}(\hat{h})$ includes an additional component, $\mathrm{Var}([h_0(X_i)]_+)$. We estimate the asymptotic standard deviation $\hat{\overline{\sigma}}_W$ by substituting the B-spline sieve estimates for the nuisance functions and replacing the population mean with its sample analog. As before, we restrict attention to observations with estimated propensity scores in $[0,1]$. A 95\% confidence interval is then given by
\[
\left[\hat{\overline{W}}(\hat{\mu}) - 1.96 \,\frac{\hat{\overline{\sigma}}_W}{\sqrt{n}}, \;\;  \hat{\overline{W}}(\hat{\mu}) + 1.96 \,\frac{\hat{\overline{\sigma}}_W}{\sqrt{n}}\right].
\]

The simulation results\footnote{Additional simulations based on GAM are presented in the Appendix \ref{tab:Thm2_Sim_GAM}.} based on 2,000 iterations with sample sizes $n = 1500, 3000,$ and $6000$ are reported in Table~\ref{tab:Thm2_Sim}. Overall, the coverage of the proposed confidence intervals converges to the nominal 95\% level as sample size increases, while the decreasing bias and standard deviation indicate a reduction in mean squared error. 

\begin{Remark}
Extrapolation bias may arise when B-spline fits are evaluated outside the support of their training samples-for instance, when $\hat{\mu}(x,1)$ is evaluated using control group data or $\hat{\mu}(x,0)$ using treated group data. To mitigate this issue, we trim observations that fall outside the common support of treated and control groups, estimate the nuisance functions on the trimmed sample, and then compute the welfare functional. Since only a small fraction of observations are removed, this adjustment has a negligible effect on the results.
\end{Remark}

\begin{table}[!htbp]
\centering
\caption{Theorem 2: Simulation Results}
\label{tab:Thm2_Sim}
\begin{tabular}{lccccccc}
\toprule
Model & $n$ & $W_{\text{true}}$ & Bias & SD & SE & SD(SE) & Coverage \\
\midrule
M8 & 1500 & 0.3857 & 0.0068 & 0.0414 & 0.0423 & 0.0025 & 0.9515 \\
   & 3000 & 0.3857 & 0.0029 & 0.0296 & 0.0297 & 0.0013 & 0.9550 \\
   & 6000 & 0.3857 & 0.0006 & 0.0206 & 0.0210 & 0.0007 & 0.9575 \\
\midrule
M9 & 1500 & 0.2358 & 0.0042 & 0.0425 & 0.0440 & 0.0025 & 0.9605 \\
   & 3000 & 0.2358 & 0.0029 & 0.0304 & 0.0311 & 0.0012 & 0.9590 \\
   & 6000 & 0.2358 & 0.0012 & 0.0209 & 0.0221 & 0.0006 & 0.9555 \\
\midrule
M10 & 1500 & 0.5001 & 0.0067 & 0.0511 & 0.0515 & 0.0018 & 0.9495 \\
   & 3000 & 0.5001 & 0.0037 & 0.0357 & 0.0366 & 0.0009 & 0.9600 \\
   & 6000 & 0.5001 & 0.0013 & 0.0253 & 0.0260 & 0.0005 & 0.9600 \\
\midrule
M11 & 1500 & 0.1033 & 0.0307 & 0.0365 & 0.0402 & 0.0038 & 0.9245 \\
   & 3000 & 0.1033 & 0.0156 & 0.0264 & 0.0286 & 0.0021 & 0.9395 \\
   & 6000 & 0.1033 & 0.0084 & 0.0192 & 0.0203 & 0.0012 & 0.9470 \\
\midrule
M12 & 1500 & 0.0499 & 0.0418 & 0.0316 & 0.0373 & 0.0045 & 0.8790 \\
   & 3000 & 0.0499 & 0.0232 & 0.0229 & 0.0263 & 0.0029 & 0.9135 \\
   & 6000 & 0.0499 & 0.0126 & 0.0168 & 0.0186 & 0.0017 & 0.9355 \\
\midrule
M13 & 1500 & 0.2315 & 0.0177 & 0.0432 & 0.0459 & 0.0032 & 0.9565 \\
   & 3000 & 0.2315 & 0.0088 & 0.0309 & 0.0323 & 0.0014 & 0.9510 \\
   & 6000 & 0.2315 & 0.0041 & 0.0221 & 0.0229 & 0.0007 & 0.9525 \\
\midrule
M14 & 1500 & 0.1250 & 0.0251 & 0.0382 & 0.0421 & 0.0059 & 0.9460 \\
   & 3000 & 0.1250 & 0.0117 & 0.0258 & 0.0287 & 0.0030 & 0.9585 \\
   & 6000 & 0.1250 & 0.0054 & 0.0182 & 0.0198 & 0.0013 & 0.9595 \\
\bottomrule
\end{tabular}

\begin{flushleft}
\footnotesize \textit{Notes:} (1) $W_{\text{true}}$ is the true welfare functional, computed numerically using $5{,}000$ Sobol points drawn from $F$.  
(2) Bias is the average deviation from $W_{\text{true}}$.  
(3) SD is the sampling standard deviation.  
(4) SE is the average estimated asymptotic standard error of $\hat{\overline{W}}(\hat{\mu})$.  
(5) SD(SE) is the standard deviation of SE across iterations.  
(6) Coverage is the empirical coverage probability of the 95\% CI.  
\end{flushleft}
\end{table}

\subsection{Results for Theorem \ref{thm:Vfunc_IntF}}

To assess the finite-sample properties of our estimation inference procedure for the value functional $\overline{V}(\hat{\mu})$ under a known target distribution $F$, we analyze the model described in Table \ref{tab:Thm3_model_specs}: 
\begin{table}[H]
\centering
\caption{Theorem 3: Model Specification}
\label{tab:Thm3_model_specs}
\scriptsize
\begin{tabular}{@{}lcccccc@{}}
\toprule
Model & \textbf{$F_0$} & \textbf{$F$} & \textbf{$\mu_0(x_1,x_2,d)$} & \textbf{$\nu_0(x_1,x_2)$} & \textbf{$p_0(x_1,x_2)$} & \textbf{$\sigma^2$} \\
\midrule
M15 
& $\text{U}([-2,2]^2)$ 
& $\text{U}([-1.5,1.5]^2)$ 
& $\begin{aligned} d \cdot (1 - x_1^2 - x_2^2) \cdot (4 + \sin(x_1)x_2 + \cos(x_2)) \end{aligned}$ 
& $1$ 
& $\tfrac{1}{1 + e^{-(x_1 - x_2)}}$ 
& $1$ \\
\bottomrule
\end{tabular}
\end{table}

Our parameter of interest is a scaled value functional under known $F$:
\[
V(h_0) \;=\; 3^2 \int \mathbbm{1}\!\left\{ \big(1 - x_{1}^2 - x_{2}^2\big)\,\big(4 + \sin(x_{1})x_{2} + \cos(x_{2})\big) \;\geq\; 0 \right\} \, dF(x_{1},x_{2}),
\]
where the scaling factor $3^2$ is chosen so that the integral evaluates to $\pi$. Equivalently, this setup can be interpreted as a Monte Carlo experiment for estimating the area of the unit circle by uniformly throwing darts over a $3 \times 3$ square.

The plug-in estimator $\hat{\overline{V}}(\hat{\mu})$ for $\overline{V}(\mu_0)$ is constructed in two steps. In the first step, we estimate $\mu_0(x,1) = \mu_0(x_1,x_2,1)$ and $\mu_0(x,0) = \mu_0(x_1,x_2,0)$ separately using B-spline sieve estimators, fitted on the treated and control groups, respectively, with sieve dimensions predetermined as described in subsection~\ref{Thm1_Sim}. In the second step, $\mathbbm{1}\{(\hat{\mu}_0(x, 1) - \hat{\mu}_0(x, 0)) \geq 0\}$ is numerically integrated using $5000$ Sobol points drawn from $F$. The resulting semiparametric two-step estimator is
\[
\hat{\overline{V}}(\hat{\mu}) \;=\; \frac{1}{M} \sum_{j=1}^M \mathbbm{1}\{\hat{\mu}(X_j^{Sobol},1) - \hat{\mu}(X_j^{Sobol},0)\big\}.
\]

The 95\% confidence interval is then constructed as 
\[
\left[\hat{\overline{V}}(\hat{\mu}) - 1.96 \,\frac{\hat{\sigma}_V}{\sqrt{n}}, \;\;  \hat{\overline{V}}(\hat{\mu}) + 1.96 \,\frac{\hat{\sigma}_V}{\sqrt{n}}\right],
\]
where the asymptotic variance estimate $\hat{\s}_{V}^{2}$ is given by 
\[
\hat{\s}_{V}^{2}= \hat{D}_\mu \overline{V}\left(\hat{\mu}\right)\left[\ol{\psi}\right]'\hat{\O} \hat{D}_\mu \overline{V} \left(\hat{\mu}\right)\left[\ol{\psi}\right],
\]
with $\hat{\Omega}$ being the estimated asymptotic covariance matrix for the OLS estimators in the linear regression model \ref{Sieve_OLS}, and 
\[
\hat{D}_\mu\overline{V}\left(\hat{\mu}\right)\left[\ol{\psi}\right] =
3^2 \begin{pmatrix}
\frac{1}{2\e}\int_{\{x \in [-1.5, 1.5]^2 :-\e < \hat{h}(x) < \e\}} 
\,\psi^{(K_1)}(x) \frac{1}{3^2} \, dx    \\
-\frac{1}{2\e}\int_{ \{x \in [-1.5, 1.5]^2:-\e < \hat{h}(x) < \e\}} 
\,\psi^{(K_0)}(x) \frac{1}{3^2} \, dx 
\end{pmatrix}.
\]

We set the tuning parameter $\epsilon = 0.005$ to mitigate bias in $\hat{\overline{V}}(\hat{\mu})$ and approximate $\hat{D}_\mu \overline{V}(\hat{\mu})[\nu]$ using the sample average over $M$ Sobol draws from $F$. Because draws from $F$ are unlikely to fall within the set $\{x \in [-1.5, 1.5]^2 : -\epsilon < \hat{h}(x) < \epsilon\}$ when $\epsilon$ is small, we use $M = 1{,}000{,}000$ Sobol points to ensure accuracy. Simulation results\footnote{Additional simulations under alternative model specifications, as well as robustness checks for different values of $\epsilon$, are presented in Appendix~\ref{tab:Thm3_robust}.} based on 2,000 iterations with sample sizes $n = 1500, 3000,$ and $6000$ are reported in Table~\ref{tab:Thm3_Sim_Results}. The results show that the coverage rate reaches the nominal level with relatively modest sample sizes, even though the value functional is not $\sqrt{n}$-estimable. Also, as sample size increases, both bias and standard error decrease, and the plug-in estimator for the standard error provides a close estimate of the sampling standard deviation.

\begin{table}[!htbp]
\centering
\caption{Theorem 3: Simulation Results}
\label{tab:Thm3_Sim_Results}
\begin{tabular}{lccccccc}
\toprule
Model & $n$ & $V_{\text{true}}$ & Bias & SD & SE & SD(SE) & Coverage \\
\midrule
M15 & 1500 & 3.1416 & 0.0076 & 0.0710 & 0.0711 & 0.0092 & 0.9420 \\
    & 3000 & 3.1416 & 0.0080 & 0.0486 & 0.0499 & 0.0029 & 0.9475 \\
    & 6000 & 3.1416 & 0.0062 & 0.0337 & 0.0353 & 0.0016 & 0.9490 \\
\bottomrule
\end{tabular}

\begin{flushleft}
\footnotesize \textit{Notes:} (1) $V_{\text{true}}$ is the true value functional, equal to $\pi$.  
(2) Bias is the average deviation from $V_{\text{true}}$.  
(3) SD is the sampling standard deviation of the estimator.  
(4) SE is the average estimated asymptotic standard error across iterations.  
(5) SD(SE) is the standard deviation of SE across iterations.  
(6) Coverage is the empirical coverage probability of the $95\%$ confidence interval.  
\end{flushleft}
\end{table}

\section{Empirical Application}\label{sec:Emp}

We revisit the empirical application analyzed in \citet[KT18]{kitagawa2018should} using Job Training Parternship Act (JTPA) dataset. The JTPA study randomized whether applicants were eligible to receive a mix of training, job-search assistance, and other services provided under the program for a period of 18 months. A detailed description of the study and an assessment of average program effects for five major subgroups of the target population are provided in \citet*{BloomEtAl1997}.

We evaluate welfare using two outcome measures, following the approach in KT18. The first is total earnings over the 30 months following treatment assignment. The second adjusts this measure by subtracting \$774 for individuals assigned to treatment, thereby incorporating program costs. These outcomes are considered from an intention-to-treat perspective, meaning we focus on eligibility assignment rather than treatment effects among compliers. The available covariates include applicants' pre-program earnings, years of education, and treatment status. Our objective is to estimate and conduct inference on the first-best welfare functional and the optimal fraction of the population that should receive treatment.

As the first step in the estimation and inference procedure, we trim observations outside the common support of the treated and control groups to enforce the overlap assumption and to avoid extrapolation when applying sieve estimators. For either outcome measure, we then estimate $\mu_0(x,1)$ nonparametrically using B-spline sieves fitted on the treated sample, and $\mu_0(x,0)$ analogously on the control sample. The sieve dimensions are selected in a data-driven manner following \citet*{chen2025adaptive}. Combining these estimates on the common support yields the CATE estimate $\hat{h}(x) = \hat{\mu}(x,1) - \hat{\mu}(x,0)$. Taking sample averages of $[\hat{h}(x)]_+$ and $\mathbbm{1}\{\hat{h}(x) > 0\}$ over the trimmed dataset produces the estimators $\hat{\overline{W}}(\hat{\mu})$ and $\hat{\overline{V}}(\hat{\mu})$, respectively. Confidence intervals for $\overline{W}(\mu_0)$ and $\overline{V}(\mu_0)$ are then constructed according to their asymptotic theories: $\left(\hat{\overline{W}}(\hat{\mu}) \pm 1.96 \,\hat{\ol{\sigma}}_W/\sqrt{N}\right),$ and $\left(\hat{\overline{V}}(\hat{\mu}) \pm 1.96 \,\hat{\sigma}_V/\sqrt{N}\right)$, where $N$ denotes the sample size of the JTPA dataset.

While computing sieve estimate of $\hat{\overline{\sigma}}_W$ is straightforward, computing sieve estimate of $\hat{\sigma}_V$ involves several additional steps. Specifically, it requires a density estimate $\hat{f}(x)$ for the covariates as discussed in Remark \ref{rem:int_Sobol_fhat}. To this end, we use a Gaussian kernel density estimator, selecting the bandwidth matrix via the smoothed cross-validation method (\texttt{Hscv()} in the \texttt{ks} package) and applying a scaling factor of $s=3$ to ensure adequate smoothness in the presence of discrete values for years of education. Furthermore, we need to specify a small hyperparameter $\epsilon$ to provide a close approximate for the pathwise derivative of the value functional in $\hat{\sigma}_V$. We set $\epsilon$ equal to a fraction $\iota = 0.01$ of the standard deviation of $\hat{h}(x)$ over the trimmed dataset. Robustness checks on the tuning parameters $s$ and $\iota$ are reported in Appendix~\ref{Appendix_Emp_Tuning}, and alternative density estimation approaches are examined in Appendix~\ref{Appendix_Emp_Density}.

Table~\ref{tab:emp_compare1} presents our estimation and inference results alongside the corresponding findings from KT18. Specifically, we report our estimated welfare gain and the share of individuals to be treated, along with their confidence intervals, based on the trimmed dataset. For comparability, we also present results obtained using the untrimmed dataset-on which KT18 conducted their analysis-with the same choice of tuning parameters. The nonparametric plug-in estimates from KT18, which target the same welfare and share parameters as in our analysis, serve as an empirical benchmark. In addition, the linear rule estimates with their associated confidence intervals from KT18 are reported to assess how conservative our nonparametric inference procedure is relative to their parametric counterparts.

Across both the trimmed and untrimmed datasets, our estimates of the welfare gain and the optimal treatment share are broadly consistent with the nonparametric plug-in rule estimates reported in KT18, despite methodological difference in the first stage: we employ sieve estimators for the nuisance functions, whereas they use a Nadaraya-Watson estimator with an Epanechnikov kernel. A key contribution of our analysis is the provision of confidence intervals for both parameters of interest. By contrast, KT18 report confidence intervals only for the welfare gain under parametric rules, but not for the optimal treatment share. Although our confidence interval for the welfare gain appears wide, its length is comparable to that under the linear rule in KT18, underscoring that our inference procedure remains sharp even while relying on fully nonparametric methods.


\begin{table}[htbp]
\centering
\caption{Estimated Welfare Gains and Share of Population to be Treated Under Nonparametric Plug-in Rule}
\label{tab:emp_compare1}
\begin{subtable}{\textwidth}
\centering
\caption{30-Month Post-Program Earnings, No Treatment Cost}
\begin{tabular}{@{}p{6cm}cc@{}}
\toprule
\textbf{Method} & \textbf{Share Treated} & \textbf{Est. Welfare Gain} \\
\midrule
Ours  (With Trimming)       & \makecell{0.89\\ \scriptsize{(0.73, 1.05)}} & \makecell{\$1,519  \\ \scriptsize{(\$691, \$2347)}}
\\
\midrule
Ours (Without Trimming)         & \makecell{0.92\\ \scriptsize{(0.76, 1.08)}} & \makecell{\$1,459  \\ \scriptsize{(\$840, \$2078)}}
\\
\midrule
KT18 (Nonparametric Plug-in)  & \makecell{0.91 \\ \scriptsize{NA}} & \makecell{\$1,693  \\ \scriptsize{NA}} \\
\midrule
KT18 (Linear)  & \makecell{0.96 \\ \scriptsize{NA}} & \makecell{\$1,180 \\ \scriptsize{(\$464, \$1,896)}} \\
\bottomrule
\end{tabular}
\end{subtable}

\vspace{1em}

\begin{subtable}{\textwidth}
\centering
\caption{30-Month Post-Program Earnings, \$774 Cost per Treatment}
\begin{tabular}{@{}p{6cm}cc@{}}
\toprule
\textbf{Method} & \textbf{Share Treated} & \textbf{Est. Welfare Gain} \\
\midrule
Ours (With Trimming)       & \makecell{0.80\\ \scriptsize{(0.53, 1.07)}} & \makecell{\$858  \\ \scriptsize{(\$152, \$1564)}}
\\
\midrule
Ours (Without Trimming)         & \makecell{0.85 \\ \scriptsize{(0.65, 1.05)}} & \makecell{\$768  \\ \scriptsize{(\$190, \$1347)}}
\\
\midrule
KT18 (Nonparametric Plug-in) & \makecell{0.78 \\ \scriptsize{NA}} & \makecell{\$996 \\ \scriptsize{NA}} \\
\midrule
KT18 (Linear) & \makecell{0.69 \\ \scriptsize{NA}} & \makecell{404 \\ \scriptsize{(\$-313,\$1,121)}} \\
\bottomrule
\end{tabular}
\end{subtable}

\vspace{1em}
\begin{flushleft}
\small \textit{Note:} We present point estimates and confidence intervals based on both the trimmed and untrimmed datasets, along with the linear rule estimates (with its confidence intervals) and the nonparametric plug-in rule estimates from \cite*{kitagawa2018should}. 
\end{flushleft}
\end{table}

\newpage

\bibliographystyle{ecca}
\bibliography{OptTreat}

\appendix

\section{Main Proofs}\label{app:Proof}

\begin{proof}[Proof of Theorem \ref{thm:Welfare_intF}]
For any $\nu$ s.t. $\int\nu^{2}\left(x\right)f\left(x\right)<\infty,$write
$h_{t}:=h_{0}+t\nu$ and consider the functional derivative
\begin{align*}
D_{h}\ol W\left(h_{0}\right)\left[\nu\right] & =\rest{\frac{d}{dt}\ol W\left(h_{t}\right)}_{t=0}\\
 & =\rest{\frac{d}{dt}\int\left[h_{t}\left(x\right)\right]_{+}f\left(x\right)dx}_{t=0}.
\end{align*}
We first show the interchangeability of the differentiation and integration
holds:
\[
\rest{\frac{d}{dt}\int\left[h_{t}\left(x\right)\right]_{+}f\left(x\right)dx}_{t=0}=\int\rest{\frac{\p}{\p t}\left[h_{t}\left(x\right)\right]_{+}}_{t=0}f\left(x\right)dx.
\]
Notice that $\left[h_{t}\left(x\right)\right]_{+}$ is Lipchitz in
$t$, we have
\[
\left|\left[h_{t}\left(x\right)\right]_{+}-\left[h_{s}\left(x\right)\right]_{+}\right|\leq\left|\nu\left(x\right)\right|\left|t-s\right|
\]
with 
\[
\int\left|\nu\left(x\right)\right|f\left(x\right)dx<\infty.
\]
Furthermore, $\left[h_{t}\left(x\right)\right]_{+}$ is almost everywhere
differentiable in $t$ with
\begin{equation}
\rest{\frac{\p}{\p t}\left[h_{t}\left(x\right)\right]_{+}}_{t=0}=\ind\left\{ h_{0}\left(x\right)\geq0\right\} \nu\left(x\right)\label{eq:dif_relu}
\end{equation}
for any $x$ s.t. $h_{0}\left(x\right)\neq0$. Since $\left\{ x:h\left(x\right)=0\right\} $
has Lebesgue measure $0$, we have
\[
P_{f}\left(h_{0}\left(X_{i}\right)=0\right)=0.
\]
Hence, \eqref{eq:dif_relu} holds a.s.-$f$ in $x$, and thus by the
dominated convergence theorem, we have
\begin{align}
\rest{\frac{d}{dt}\int\left[h_{t}\left(x\right)\right]_{+}f\left(x\right)dx}_{t=0} & =\int\rest{\frac{\p}{\p t}\left[h_{t}\left(x\right)\right]_{+}}_{t=0}f\left(x\right)dx.\nonumber \\
 & =\int\ind\left\{ h_{0}\left(x\right)\geq0\right\} \nu\left(x\right)f(x)dx\label{eq:dt_int_relu}
\end{align}

We now switch the notation from $h_{0}$ to $\mu_{0}$ for subsequent
analysis, and consider the functional derivative of $\ol W\left(\mu_{0}\right)$
w.r.t. $\mu$ in the direction of $\nu$. Note that $\mu\left(x,d\right)$
and $\nu\left(x,d\right)$ are functions of both $x$ and $d$. Applying
\eqref{eq:dt_int_relu}, we have
\begin{align}
D_{\mu}\ol W\left(\mu_{0}\right)\left[\nu\right]:=\  & \rest{\frac{d}{dt}\ol W\left(\mu_{0}+t\nu\right)}_{t=0}\nonumber \\
=\  & \int\ind\left\{ \mu_{0}\left(x,1\right)-\mu_{0}\left(x,0\right)\geq0\right\} \left(\nu\left(x,1\right)-\nu\left(x,0\right)\right)f\left(x\right)dx\nonumber \\
=\  & \int\ind\left\{ h_{0}\left(x\right)\geq0\right\} \left(\nu\left(x,1\right)-\nu\left(x,0\right)\right)\l\left(x\right)f_{0}\left(x\right)dx\nonumber \\
=\  & \E\left[\ind\left\{ h_{0}\left(X_{i}\right)\geq0\right\} \l\left(X_{i}\right)\left(\nu\left(X_{i},1\right)-\nu\left(X_{i},0\right)\right)\right]\nonumber \\
=\  & \E\left[\ind\left\{ h_{0}\left(X_{i}\right)\geq0\right\} \l\left(X_{i}\right)\left(\frac{D_{i}}{p_{0}\left(X_{i}\right)}\nu\left(X_{i},D_{i}\right)-\frac{1-D_{i}}{1-p_{0}\left(X_{i}\right)}\nu\left(X_{i},D_{i}\right)\right)\right]\nonumber \\
=\  & \E\left[\ind\left\{ h_{0}\left(X_{i}\right)\geq0\right\} \l\left(X_{i}\right)\left(\frac{D_{i}}{p_{0}\left(X_{i}\right)}-\frac{1-D_{i}}{1-p_{0}\left(X_{i}\right)}\right)\nu\left(X_{i},D_{i}\right)\right]\nonumber \\
=\  & \E\left[\nu^{*}\left(X_{i},D_{i}\right)\nu\left(X_{i},D_{i}\right)\right]\label{eq:DW_mu0}
\end{align}
where $\l:=f/f_{0}$, $\E[\cdot]$ is expectation taken with respect to the training data distribution, and
\begin{equation}
\nu^{*}\left(x,d\right):=\ind\left\{ h_{0}\left(x\right)\geq0\right\} \l\left(x\right)\left(\frac{d}{p_{0}\left(x\right)}-\frac{1-d}{1-p_{0}\left(x\right)}\right)\label{eq:DW_riesz}
\end{equation}
is the Riesz representer for the linear functional $D_{\mu}\ol W\left(\mu_{0}\right)\left[\cd\right]$. Since 
\begin{align*}
 & \E\left[\rest{\left(\frac{D_{i}}{p_{0}\left(X_{i}\right)}-\frac{1-D_{i}}{1-p_{0}\left(X_{i}\right)}\right)^{2}}X_{i}\right]\\
=\  & \E\left[\rest{\left(\frac{D_{i}-D_{i}p_{0}\left(X_{i}\right)-p_{0}\left(X_{i}\right)+D_{i}p_{0}\left(X_{i}\right)}{p_{0}\left(X_{i}\right)\left(1-p_{0}\left(X_{i}\right)\right)}\right)^{2}}X_{i}\right]\\
=\  & \frac{\E\left[\rest{\left(D_{i}-p_{0}\left(X_{i}\right)\right)^{2}}X_{i}\right]}{p_{0}^{2}\left(X_{i}\right)\left(1-p_{0}\left(X_{i}\right)\right)^{2}}=\frac{p_{0}\left(X_{i}\right)\left(1-p_{0}\left(X_{i}\right)\right)}{p_{0}^{2}\left(X_{i}\right)\left(1-p_{0}\left(X_{i}\right)\right)^{2}}\\
=\  & \frac{1}{p_{0}\left(X_{i}\right)\left(1-p_{0}\left(X_{i}\right)\right)}
\end{align*}
under Assumption \ref{assu:main}(b), the Riesz representer
$\nu^{*}$ has finite norm 
\begin{align*}
\norm{\nu^{*}}^{2} & =\E\left[\ind\left\{ h_{0}\left(X_{i}\right)\geq0\right\} \l^{2}\left(X_{i}\right)\left(\frac{D_{i}}{p_{0}\left(X_{i}\right)}-\frac{1-D_{i}}{1-p_{0}\left(X_{i}\right)}\right)^{2}\right]\\
 & =\E\left[\frac{\ind\left\{ h_{0}\left(X_{i}\right)\geq0\right\} \l^{2}\left(X_{i}\right)}{p_{0}\left(X_{i}\right)\left(1-p_{0}\left(X_{i}\right)\right)}\right]<\infty
\end{align*}
showing that $D_{\mu} \ol W\left(\mu_{0}\right)$ is a regular linear functional.

Next, we control the remainder term from the linearization. Specifically,
note that 
\[
D_{h}W\left(h\right)\left[\nu\right]=\int\ind\left\{ h\left(x\right)\geq0\right\} \nu\left(x\right)f(x) dx
\]
which is exactly of the form of the generic value functional. We then
have
\begin{align*}
D_{h}^{2}W\left(h_{0}\right)\left[\nu,u\right] & =\int_{\left\{x\in\R^d: h_0\left(x\right)=0\right\} }\frac{1}{\norm{\Dif_x h_{0}\left(x\right)}}\nu\left(x\right)u\left(x\right)f(x)d{\cal H}^{d-1}\left(x\right)\\
 & \leq\frac{1}{\ul{\e}}\norm{\nu}_{\infty}\norm u_{\infty}.
\end{align*}
Similar bound applies to $D_{\mu}^{2}\ol W\left(h_{0}\right)\left[\nu,u\right]$ as well. Thus we obtain:
\[
\left|\ol W\left(\hat{\mu}\right)-\ol W\left(\mu_{0}\right)-D_{\mu} \ol W\left(\mu_{0}\right)\left[\hat{\mu}-\mu_{0}\right]\right|\leq M\norm{\hat{\mu}-\mu_{0}}_{\infty}^{2}.
\]
Hence, we have
\begin{align*}
\sqrt{n}\left(\ol W\left(\hat{\mu}\right)-\ol W\left(\mu_{0}\right)\right) & =\sqrt{n}D_{\mu}\ol W\left(\mu_{0}\right)\left[\hat{\mu}-\mu_{0}\right]+\sqrt{n}O_{p}\left(\norm{\hat{\mu}-\mu_{0}}_{\infty}^{2}\right)\\
 & =\frac{1}{\sqrt{n}}\sum_{i=1}^{n}\nu^{*}\left(X_{i},D_{i}\right)\e_{i}+o_{p}\left(1\right)+\sqrt{n}o_{p}\left(n^{-1/2}\right)\\
 & \dto\cN\left(0,\s_{W}^{2}\right)
\end{align*}
where, writing $\s_{\e}^{2}\left(x\right):=\E\left[\rest{\e_{i}^{2}}X_{i}=x\right]$,
\begin{align*}
\s_{W}^{2} & :=\text{Var}\left(\nu^{*}\left(X_{i},D_{i}\right)\e_{i}\right)=\E\left[\frac{\ind\left\{ h_{0}\left(X_{i}\right)\geq0\right\} \l^{2}\left(X_{i}\right)\s_{\e}^{2}\left(X_{i}\right)}{p_{0}\left(X_{i}\right)\left(1-p_{0}\left(X_{i}\right)\right)}\right]\\
 & =\int\frac{\ind\left\{ h_{0}\left(x\right)\geq0\right\} \l^{2}\left(x\right)\s_{\e}^{2}\left(x\right)}{p_{0}\left(x\right)\left(1-p_{0}\left(X_{i}\right)\right)}f_{0}(x)dx\\
 & =\int\frac{\ind\left\{ h_{0}\left(x\right)\geq0\right\} \l\left(x\right)\s_{\e}^{2}\left(x\right)}{p_{0}\left(x\right)\left(1-p_{0}\left(X_{i}\right)\right)}f(x)dx
\end{align*}
\end{proof}
\begin{proof}[Proof of Theorem \ref{thm:Welfare_mean}]
We apply the derivations in Theorem \ref{thm:Welfare_intF} with
$f=f_{0}$ and consequently $\l\equiv1$. 

Consider the following standard empircal process decomposition 
\begin{align}
\sqrt{n}\left(\P_{n}g_{\mu}-Pg_{\mu_{0}}\right) & =\sqrt{n}P\left(g_{\hat{\mu}}-g_{\mu_{0}}\right)+\GG_{n}g_{\mu_{0}}+\GG_{n}\left(g_{\hat{\mu}}-g_{\mu_{0}}\right),\label{eq:Emp_Decomp}
\end{align}
where $g_{\mu}\left(x\right):=\left[\mu\left(x,1\right)-\mu\left(x,0\right)\right]_{+}$,
$\P_{n}g:=\frac{1}{n}\sum_{i=1}^{n}g$, $Pg=\int gdF_{0}$ and $\GG_{n}:=\sqrt{n}\left(\P_{n}-P\right)$.
Note that the term $\sqrt{n}P\left(g_{\hat{h}}-g_{h_{0}}\right)$
corresponds the analysis of population expectation (integral) with
respect to the true distribution $F=F_{0}$ in the last subsection.
There are two additional terms that appear due to the use of the sample
average: a ``stochastic equicontinuity'' term $\GG_{n}\left(g_{\hat{\mu}}-g_{\mu_{0}}\right)$
that will be shown to be asymptotically negligible under the permanance
of Donsker property, as well as a ``CLT''-term $\GG_{n}g_{\mu_{0}}$
that adds to the asymptotic variance of the estimator. 

Under Assumption \ref{assu:main}(c), $\hat{\mu}-\mu_{0}$
is assumed to belong to a Donsker class of functions, and the Lipchitz
transformation of $\mu_{0}$ through the ReLU function $\left[\cd\right]_{+}$
preserves the Donsker property. Hence, $g_{\hat{\mu}}-g_{\mu_{0}}$
also belongs to a Donsker class and thus $\GG_{n}\left(g_{\hat{\mu}}-g_{\mu_{0}}\right)=o_{p}\left(1\right)$.

Then, by \eqref{eq:Emp_Decomp} we have, 
\begin{align*}
 & \sqrt{n}\left(\hat{W}\left(\hat{h}\right)-W\left(h_{0}\right)\right)\equiv\sqrt{n}\left(\hat{\ol W}\left(\hat{\mu}\right)-\ol W\left(\mu_{0}\right)\right)\\
= & \frac{1}{\sqrt{n}}\sum_{i=1}^{n}\left(\left[h_{0}\left(X_{i}\right)\right]_{+}-W\left(h_{0}\right)+\nu^{*}\left(X_{i},D_{i}\right)\e_{i}\right)+o_{p}\left(1\right)\dto\cN\left(0,\ol{\s}_{W}^{2}\right)
\end{align*}
where 
\begin{align*}
\ol{\s}_{W}^{2} & :=\text{Var\ensuremath{\left(\left[h_{0}\left(X_{i}\right)\right]_{+}-W(h_0)+\nu^{*}\left(X_{i},D_{i}\right)\e_{i}\right)}}\\
  & =\text{Var\ensuremath{\left(\left[h_{0}\left(X_{i}\right)\right]_{+}\right)}}+\text{Var}\left(\nu^{*}\left(X_{i},D_{i}\right)\e_{i}\right)+2\text{Cov}\left(\left[h_{0}\left(X_{i}\right)\right]_{+},\nu^{*}\left(X_{i},D_{i}\right)\e_{i}\right)\\
 & =\text{Var\ensuremath{\left(\left[h_{0}\left(X_{i}\right)\right]_{+}\right)}}+\text{Var}\left(\nu^{*}\left(X_{i},D_{i}\right)\e_{i}\right)\\
 & =\text{Var\ensuremath{\left(\left[h_{0}\left(X_{i}\right)\right]_{+}\right)}}+ {\s}_{W}^{2}.
\end{align*}
\end{proof}
\begin{proof}[Proof of Theorem \ref{thm:Vfunc_IntF}]
Recall that 
\[
\t_{0}=V\left(h_{0}\right):=\int\ind\left\{ h_{0}\left(x\right)\geq0\right\} v_{0}\left(x\right)f(x)dx,~~~\hat{\t}=V\left(\hat{h}\right)={\ol V}\left(\hat{\mu}\right).
\]
Taking the functional derivative according to Chen \& Gao (2025),
we obtain the following submanifold integral with submanifold dimension
$m=d-1$, 
\begin{align*}
D_h V\left(h_{0}\right)\left[\nu\right] & :=\int_{\left\{x\in\R^d: h_{0}\left(x\right)=0\right\} }\frac{\nu\left(x\right)}{\norm{\Dif_x h_{0}\left(x\right)}}v_{0}\left(x\right)f\left(x\right)d{\cal H}^{d-1}\left(x\right).
\end{align*}
Equivalently, using the notation $\mu_{0}$, we have 
\begin{align*}
D_{\mu}\ol{V}\left(\mu_{0}\right)\left[\nu\right] & :=\int_{\left\{x\in\R^d: h_{0}\left(x\right)=0\right\} }\frac{\left(\nu\left(x,1\right)-\nu\left(x,0\right)\right)}{\norm{\Dif_x h_{0}\left(x\right)}}v_{0}\left(x\right)f\left(x\right)d{\cal H}^{d-1}\left(x\right).
\end{align*}
Applying Theorem 4 and Proposition 2 in \citet{chen2025semiparametric}, we obtain that 
\[
\frac{\sqrt{n}\left(\hat{\t}-\t_{0}\right)}{\s_{V,n}}\dto\cN\left(0,1\right)\quad\text{with }\s_{V,n}^{2} \asymp K_{n}^{\frac{1}{d}}.
\]
\end{proof}
\begin{proof}[Proof of Theorem \ref{thm:Vfunc_Mean}]
We now work with the following rescaled empirical process decomposition
\begin{align}
\sqrt{\frac{n}{\s_{V,n}^{2}}}\left(\P_{n}g_{\mu}-Pg_{\mu_{0}}\right) & =\sqrt{\frac{n}{\s_{V,n}^{2}}}P\left(g_{\hat{\mu}}-g_{\mu_{0}}\right)+\s_{V,n}^{-1}\GG_{n}g_{\mu_{0}}+\s_{V,n}^{-1}\GG_{n}\left(g_{\hat{\mu}}-g_{\mu_{0}}\right),\label{eq:Emp_Decomp-1}
\end{align}
where $g_{\mu}\left(x\right):=\ind\left\{ \mu\left(x,1\right)-\mu\left(x,0\right)\geq0\right\} $.

Note that the term $P\left(g_{\hat{\mu}}-g_{\mu_{0}}\right)\equiv\ol V\left(\mu\right)-\ol V\left(\mu_{0}\right)$
has been analyzed in Section \ref{subsec:ValueFunc_IntF},
where we have shown that 
\[
\sqrt{\frac{n}{\s_{V,n}^{2}}}P\left(g_{\hat{\mu}}-g_{\mu_{0}}\right)\dto\cN\left(0,1\right).
\]
Given the above, the term $\s_{V,n}^{-1}\GG_{n}g_{\mu_{0}}=O_{p}\left(\sqrt{K_{n}^{-1/d}}\right)=o_{p}\left(1\right)$
becomes asymptotically negligible. Below we seek to show that, in
fact, the last term is also asymptotically negligible: 
\[
\s_{V,n}^{-1}\GG_{n}\left(g_{\hat{\mu}}-g_{\mu_{0}}\right)=o_{p}\left(1\right).
\]

Note that $g_{\mu}\left(x\right)$ involves a discontinuous indicator
function, which does not preserve the Donsker property for the Holder
class in general.\footnote{In contrast, the indicator function transformation preserves the Donsker
property for VC classes of functions, in which case the Donsker property
would deliver $\GG_{n}\left(g_{\hat{\mu}}-g_{\mu_{0}}\right)=o_{p}\left(1\right)$,
which implies the weaker condition $K_{n}^{-1/d}\GG_{n}\left(g_{\hat{\mu}}-g_{\mu_{0}}\right)=o_{p}\left(1\right)$
required in this paper.} We thus directly derive the Donsker property via a maximal inequality
on the functional class 
\[
{\cal G}_{a_{n}}:=\left\{ g_{\mu}-g_{\mu_{0}}:\norm{\mu-\mu_{0}}_{\infty}\leq a_{n}\right\} .
\]

We first obtain an envelope function for ${\cal G}_{a_{n}}$ and its
second moment: 
\begin{align*}
 & \left|g_{\mu}\left(x\right)-g_{\mu_{0}}\left(x\right)\right|\\
=\  & \left|\ind\left\{ \mu\left(x,1\right)-\mu\left(x,0\right)\geq0\right\} -\ind\left\{ \mu_{0}\left(x,1\right)-\mu_{0}\left(x,0\right)\geq0\right\} \right|\\
=\  & \ind\left\{ \mu\left(x,1\right)-\mu\left(x,0\right)\geq0>\mu_{0}\left(x,1\right)-\mu_{0}\left(x,0\right)\right\} \\
 & +\ind\left\{ \mu_{0}\left(x,1\right)-\mu_{0}\left(x,0\right)\geq0>\mu\left(x,1\right)-\mu\left(x,0\right)\right\} \\
\leq\  & \ind\left\{ \mu_{0}\left(x,1\right)-\mu_{0}\left(x,0\right)+2\norm{\mu-\mu_{0}}_{\infty}\geq0>\mu_{0}\left(x,1\right)-\mu_{0}\left(x,0\right)\right\} \\
 & +\ind\left\{ \mu_{0}\left(x,1\right)-\mu_{0}\left(x,0\right)\geq0>\mu_{0}\left(x,1\right)-\mu_{0}\left(x,0\right)-2\norm{\mu-\mu_{0}}_{\infty}\right\} \\
\leq\  & \ind\left\{ \mu_{0}\left(x,1\right)-\mu_{0}\left(x,0\right)+2a_{n}\geq0>\mu_{0}\left(x,1\right)-\mu_{0}\left(x,0\right)\right\} \\
 & +\ind\left\{ \mu_{0}\left(x,1\right)-\mu_{0}\left(x,0\right)\geq0>\mu_{0}\left(x,1\right)-\mu_{0}\left(x,0\right)-2a_{n}\right\} \\
=\  & \ind\left\{ \left|\mu_{0}\left(x,1\right)-\mu_{0}\left(x,0\right)\right|\leq2a_{n}\right\} \\
=:\  & G_{a_{n}}
\end{align*}
with 
\begin{align*}
PG_{a_{n}}^{2} & =\P\left(\left|\mu_{0}\left(X_{i},1\right)-\mu_{0}\left(X_{i},0\right)\right|\leq2a_{n}\right)\\
 & =\P\left(\left|h_{0}\left(X\right)\right|\leq2a_{n}\right)\\
 & \leq Ma_{n}.
\end{align*}
Then, provided a finite uniform entropy integral $J_{{\cal G}}$,
\[
P\sup_{\norm{\mu-\mu_{0}}_{\infty}\leq a_{n}}\left|\GG_{n}\left(g_{\mu}-g_{\mu_{0}}\right)\right|\leq J_{{\cal G}}\sqrt{PG_{a_{n}}^{2}}\leq M\sqrt{a_{n}}
\]
and thus 
\[
\s_{V,n}^{-1}\GG_{n}\left(g_{\mu}-g_{\mu_{0}}\right)=O_{p}\left(\sqrt{K_{n}^{-1/d}a_{n}}\right)=o_{p}\left(1\right).
\]

Hence, 
\begin{align*}
\sqrt{\frac{n}{\s_{V,n}^{2}}}\left(\hat{V}\left(\hat{h}\right)-V\left(h_{0}\right)\right) & \equiv\sqrt{\frac{n}{\s_{V,n}^{2}}}\left(\hat{\ol V}\left(\hat{\mu}\right)-\ol V\left(\mu_{0}\right)\right)\\
 & =\sqrt{\frac{n}{\s_{V,n}^{2}}}\left(\ol V\left(\hat{\mu}\right)-\ol V\left(\mu_{0}\right)\right)+o_{p}\left(1\right)\\
 & \dto\cN\left(0,1\right).
\end{align*}
\end{proof}

\section{Additional Simulation Results}
\subsection{Theorem 1} \label{tab:Thm1_Sim_GAM}
In principle, a variety of nonparametric estimators can be employed in the first step for the nuisance parameters, provided their convergence rates are sufficiently fast. In this subsection, we assess the estimation and inference performance of the welfare functional estimator under a known target distribution $F$, using a generalized additive model (GAM) as the first-stage estimator. Specifically, for Models 1-3, we use B-splines as the smooth terms in GAM, while for Models 4-7, we adopt the default tensor product smooths with cubic regression splines.

The simulation results are presented in Table~\ref{tab:Thm1_GAM}. In Models 6 and 7, the confidence intervals exhibit overcoverage relative to the nominal 95\% level. This likely reflects the slow convergence of the variance plug-in estimator toward the true asymptotic variance of the welfare estimator, potentially due to suboptimal choices of smooth terms in the GAM specification.

\begin{table}[!htbp]
\centering
\caption{Theorem 1 Simulation Results for Models 1--7 (GAM)}
\label{tab:Thm1_GAM}
\begin{tabular}{lccccccc}
\toprule
Model & $n$ & $W_{\text{true}}$ & Bias & SD & SE & SD(SE) & Coverage \\
\midrule
M1 & 1500 & 0.3857 &  0.0062 & 0.0474 & 0.0481 & 0.0035 & 0.9500 \\
        & 3000 & 0.3857 &  0.0025 & 0.0341 & 0.0339 & 0.0018 & 0.9525 \\
        & 6000 & 0.3857 & -0.0009 & 0.0231 & 0.0239 & 0.0009 & 0.9545 \\
\midrule
M2 & 1500 & 0.2358 & -0.0092 & 0.0445 & 0.0513 & 0.0027 & 0.9670 \\
        & 3000 & 0.2358 & -0.0057 & 0.0335 & 0.0365 & 0.0013 & 0.9610 \\
        & 6000 & 0.2358 & -0.0048 & 0.0239 & 0.0259 & 0.0006 & 0.9615 \\
\midrule
M3 & 1500 & 0.5001 &  0.0037 & 0.0587 & 0.0604 & 0.0022 & 0.9580 \\
        & 3000 & 0.5001 &  0.0021 & 0.0409 & 0.0430 & 0.0011 & 0.9605 \\
        & 6000 & 0.5001 & -0.0002 & 0.0294 & 0.0305 & 0.0006 & 0.9600 \\
\midrule
M4 & 1500 & 0.1033 &  0.0303 & 0.0465 & 0.0550 & 0.0065 & 0.9615 \\
        & 3000 & 0.1033 &  0.0193 & 0.0338 & 0.0394 & 0.0036 & 0.9635 \\
        & 6000 & 0.1033 &  0.0130 & 0.0243 & 0.0281 & 0.0021 & 0.9610 \\
\midrule
M5 & 1500 & 0.0499 &  0.0316 & 0.0407 & 0.0527 & 0.0087 & 0.9685 \\
        & 3000 & 0.0499 &  0.0217 & 0.0294 & 0.0372 & 0.0050 & 0.9680 \\
        & 6000 & 0.0499 &  0.0138 & 0.0218 & 0.0261 & 0.0030 & 0.9635 \\
\midrule
M6 & 1500 & 0.2315 &  0.0035 & 0.0450 & 0.0634 & 0.0039 & 0.9930 \\
        & 3000 & 0.2315 &  0.0004 & 0.0314 & 0.0448 & 0.0019 & 0.9965 \\
        & 6000 & 0.2315 & -0.0008 & 0.0234 & 0.0317 & 0.0010 & 0.9920 \\
\midrule
M7 & 1500 & 0.1250 &  0.0040 & 0.0380 & 0.0545 & 0.0064 & 0.9950 \\
        & 3000 & 0.1250 &  0.0026 & 0.0272 & 0.0385 & 0.0030 & 0.9920 \\
        & 6000 & 0.1250 &  0.0016 & 0.0199 & 0.0272 & 0.0015 & 0.9915 \\
\bottomrule
\end{tabular}

\begin{flushleft}
\footnotesize \textit{Notes:} (1) $W_{\text{true}}$ denotes the true welfare functional, computed numerically using $5{,}000$ Sobol points drawn from $F$.  
(2) Bias is the average deviation from $W_{\text{true}}$.  
(3) SD is the sampling standard deviation.  
(4) SE is the average estimated asymptotic standard error of $\hat{\overline{W}}(\hat{\mu})$.  
(5) SD(SE) is the standard deviation of SE across iterations.  
(6) Coverage is the empirical coverage probability of the 95\% CI.  
\end{flushleft}
\end{table}

\subsection{Theorem 2} \label{tab:Thm2_Sim_GAM}
We next examine the estimation and inference performance of the welfare functional estimator under an unknown target distribution $F$, using GAM as the first-stage estimator. The GAM specifications are identical to those described in the previous subsection. The simulation results, reported in Table~\ref{tab:Thm2_GAM}, demonstrate that our proposed inference procedure exhibits strong finite-sample performance.

\begin{table}[!htbp]
\centering
\caption{Theorem 2 Simulation Results for Models 1--7 (GAM)}
\label{tab:Thm2_GAM}
\small
\begin{tabular}{lrrrrrrr}
\toprule
Model & $n$ & $W_{\text{true}}$ & Bias & SD & SE & SD(SE) & Coverage \\
\midrule
M8 & 1,500 & 0.3857 & 0.0077 & 0.0412 & 0.0421 & 0.0025 & 0.951 \\
        & 3,000 & 0.3857 & 0.0036 & 0.0296 & 0.0297 & 0.0013 & 0.952 \\
        & 6,000 & 0.3857 & 0.0008 & 0.0206 & 0.0209 & 0.0007 & 0.956 \\
\midrule
M9 & 1,500 & 0.2358 & 0.0000 & 0.0414 & 0.0436 & 0.0021 & 0.965 \\
        & 3,000 & 0.2358 & 0.0008 & 0.0297 & 0.0310 & 0.0010 & 0.962 \\
        & 6,000 & 0.2358 & 0.0002 & 0.0206 & 0.0220 & 0.0005 & 0.960 \\
\midrule
M10 & 1,500 & 0.5001 & 0.0059 & 0.0512 & 0.0514 & 0.0018 & 0.952 \\
        & 3,000 & 0.5001 & 0.0033 & 0.0357 & 0.0366 & 0.0010 & 0.957 \\
        & 6,000 & 0.5001 & 0.0010 & 0.0252 & 0.0260 & 0.0005 & 0.962 \\
\midrule
M11 & 1,500 & 0.1033 & 0.0266 & 0.0362 & 0.0396 & 0.0036 & 0.938 \\
        & 3,000 & 0.1033 & 0.0154 & 0.0261 & 0.0284 & 0.0020 & 0.940 \\
        & 6,000 & 0.1033 & 0.0094 & 0.0190 & 0.0202 & 0.0011 & 0.942 \\
\midrule
M12 & 1,500 & 0.0499 & 0.0277 & 0.0324 & 0.0370 & 0.0052 & 0.941 \\
        & 3,000 & 0.0499 & 0.0172 & 0.0232 & 0.0263 & 0.0029 & 0.946 \\
        & 6,000 & 0.0499 & 0.0106 & 0.0169 & 0.0187 & 0.0016 & 0.941 \\
\midrule
M13 & 1,500 & 0.2315 & 0.0062 & 0.0431 & 0.0456 & 0.0024 & 0.964 \\
        & 3,000 & 0.2315 & 0.0027 & 0.0308 & 0.0323 & 0.0012 & 0.959 \\
        & 6,000 & 0.2315 & 0.0007 & 0.0218 & 0.0229 & 0.0006 & 0.957 \\
\midrule
M14 & 1,500 & 0.1250 & 0.0056 & 0.0367 & 0.0394 & 0.0043 & 0.957 \\
        & 3,000 & 0.1250 & 0.0023 & 0.0249 & 0.0276 & 0.0019 & 0.968 \\
        & 6,000 & 0.1250 & 0.0007 & 0.0176 & 0.0195 & 0.0009 & 0.969 \\
\bottomrule
\end{tabular}

\begin{flushleft}
\footnotesize \textit{Notes:} (1) $W_{\text{true}}$ denotes the true welfare functional, computed numerically using $5{,}000$ Sobol points drawn from $F$.  
(2) Bias is the average deviation from $W_{\text{true}}$.  
(3) SD is the sampling standard deviation.  
(4) SE is the average estimated asymptotic standard error of $\hat{\overline{W}}(\hat{\mu})$.  
(5) SD(SE) is the standard deviation of SE across iterations.  
(6) Coverage is the empirical coverage probability of the 95\% CI.  
\end{flushleft}
\end{table}

\subsection{Theorem 3} \label{tab:Thm3_robust}

In addition to the model specification considered in the main text for the Theorem \ref{thm:Vfunc_IntF} simulations, we conducted two auxiliary simulation studies with slightly modified designs: one with the target population $F \sim \text{U}([-1.75,1.75]^2)$ and the other with $F \sim \text{U}([-1.9,1.9]^2)$, keeping all other settings fixed. The results are reported in Table \ref{tab:Thm3_Sim_Support_Sensitivity}. We find that the coverage rate approaches $95\%$ as the sample size increases in both cases. However, in the latter case the supports of the treated and control groups do not fully nest the target population $F$, potentially leading to extrapolation bias of the first-stage estimator and a slight undercoverage.

\begin{table}[!htbp]
\centering
\caption{Sensitivity Checks for Model 15}
\label{tab:Thm3_Sim_Support_Sensitivity}
\begin{tabular}{lcccccc}
\toprule
Case & $n$ & $W_{\text{true}}$ & Bias & SD & SE & Coverage \\
\midrule

(a) $F \sim \text{U}[-1.75,1.75]^2$ & 1500 & 3.1416 & 0.0222 & 0.1289 & 0.1198 & 0.926 \\
                                    & 3000 & 3.1416 & 0.0077 & 0.0801 & 0.0796 & 0.942 \\
                                    & 6000 & 3.1416 & 0.0089 & 0.0542 & 0.0558 & 0.952 \\
\midrule
(b) $F \sim \text{U}[-1.9,1.9]^2$  & 1500 & 3.1416 & 0.0118 & 0.0746 & 0.0676 & 0.923 \\
                                    & 3000 & 3.1416 & 0.0047 & 0.0474 & 0.0453 & 0.938 \\
                                    & 6000 & 3.1416 & 0.0037 & 0.0334 & 0.0317 & 0.936 \\
\bottomrule
\end{tabular}
\end{table}

Besides the sieve dimensions for estimating $\mu_0(x,1)$ and $\mu_0(x,0)$, $\epsilon$ is an additional tuning parameter that determines how well the level set is approximated. To assess the sensitivity of our method to this parameter, we consider the alternative values $\epsilon = 0.0025$ and $\epsilon = 0.0075$ while keeping anything else fixed. Table~\ref{tab:Thm3_Sim_Epsilon_Sensitivity} shows that the simulation results are not materially affected by the choice of $\epsilon$.

\begin{table}[!htbp]
\centering
\caption{Sensitivity Checks for Model 15}
\label{tab:Thm3_Sim_Epsilon_Sensitivity}
\begin{tabular}{lcccccc}
\toprule
Case & $n$ & $W_{\text{true}}$ & Bias & SD & SE & Coverage \\
\midrule
(a) $\epsilon = 0.0075$ & 1500 & 3.1416 & 0.0120 & 0.0642 & 0.0635 & 0.9395 \\
                        & 3000 & 3.1416 & 0.0126 & 0.0442 & 0.0448 & 0.9445 \\
                        & 6000 & 3.1416 & 0.0102 & 0.0311 & 0.0316 & 0.9445 \\
\midrule
(b) $\epsilon = 0.0025$ & 1500 & 3.1416 & 0.0120 & 0.0642 & 0.0638 & 0.9420 \\
                        & 3000 & 3.1416 & 0.0126 & 0.0442 & 0.0450 & 0.9445 \\
                        & 6000 & 3.1416 & 0.0102 & 0.0311 & 0.0317 & 0.9450 \\
\bottomrule
\end{tabular}
\end{table}

\subsection{Simulation Results for DML}
An alternative to the semi-parametric two-step estimation of the welfare functional is to use the double-debiased machine learning approach. The DML estimator is implemented via a cross-fitting scheme. Specifically, the sample is partitioned into K folds. For each fold, we use the observations in the remaining $K-1$ folds to obtain nonparametric estimates of the nuisance functions $\mu_0(x,1)$ and $\mu_0(x,0)$ (in our case, using GAM). These estimates are then applied to the held-out fold to take an average of the fitted indicator \(\mathbbm{1}\{\hat{\mu}_0(x,1) - \hat{\mu}_0(x,0) \geq 0\}\). This process is repeated so that each fold serves once as the testing set, and the final estimator is obtained by averaging the resulting values across all folds. A confidence interval could then be constructed as in Theorem \ref{thm:Welfare_mean} with the debiased ML estimate in replacement of the semi-parametric two-step estimate.

We simulate 1,000 iterations for Models 1 to 3 with $K = 5$, and the results are summarized in table \ref{Thm3}. The confidence intervals exhibit slight undercoverage in small samples, but this issue diminishes as the sample size increases. This pattern is consistent with the asymptotic nature of the interval's construction.

\begin{table}[!htbp]
\centering
\caption{Theorem DML Simulation Results}
\label{Thm3}
\small
\begin{tabular}{lrrrrrr}
\toprule
\textbf{Model} & \textbf{n} & \textbf{$W_0$} & \textbf{Bias} & \textbf{SD} & \textbf{SE} & \textbf{Coverage} \\
\midrule
Model 1 & 1,500 & 0.3857 & -0.01271 & 0.04583 & 0.04215 & 0.906 \\
Model 1 & 3,000 & 0.3857 & -0.00776 & 0.03184 & 0.02969 & 0.921 \\
Model 1 & 6,000 & 0.3857 & -0.00392 & 0.02193 & 0.02092 & 0.934 \\
\addlinespace
Model 2 & 1,500 & 0.2358 & -0.01051 & 0.04900 & 0.04372 & 0.916 \\
Model 2 & 3,000 & 0.2358 & -0.00475 & 0.03319 & 0.03097 & 0.934 \\
Model 2 & 6,000 & 0.2358 & -0.00195 & 0.02297 & 0.02200 & 0.929 \\
\addlinespace
Model 3 & 1,500 & 0.5001 & -0.00568 & 0.05636 & 0.05134 & 0.913 \\
Model 3 & 3,000 & 0.5001 & -0.00261 & 0.03701 & 0.03654 & 0.953 \\
Model 3 & 6,000 & 0.5001 & -0.00174 & 0.02628 & 0.02597 & 0.954 \\
\bottomrule
\end{tabular}
\end{table}

\subsection{Sieve Variance Estimation} \label{SieveVar}

\begin{table}[!htbp]
\centering
\caption{Theorem 1 Simulation Results for Models 1--3 (Sieve Variance)}
\label{tab:Sim_Models123}
\begin{tabular}{lccccccc}
\toprule
Model & $n$ & $W_{\text{true}}$ & Bias & SD & SE & SD(SE) & Coverage \\
\midrule
M1 & 1500 & 0.3857 & 0.0152 & 0.0469 & 0.0467 & 0.0035 & 0.9370 \\
   & 3000 & 0.3857 & 0.0076 & 0.0337 & 0.0329 & 0.0019 & 0.9390 \\
   & 6000 & 0.3857 & 0.0036 & 0.0229 & 0.0232 & 0.0011 & 0.9510 \\
\midrule
M2 & 1500 & 0.2358 & 0.0039 & 0.0481 & 0.0488 & 0.0029 & 0.9480 \\
   & 3000 & 0.2358 & 0.0042 & 0.0350 & 0.0345 & 0.0014 & 0.9425 \\
   & 6000 & 0.2358 & 0.0025 & 0.0241 & 0.0245 & 0.0007 & 0.9535 \\
\midrule
M3 & 1500 & 0.5001 & 0.0038 & 0.0580 & 0.0581 & 0.0023 & 0.9560 \\
   & 3000 & 0.5001 & 0.0019 & 0.0407 & 0.0413 & 0.0012 & 0.9560 \\
   & 6000 & 0.5001 & -0.0002 & 0.0293 & 0.0294 & 0.0006 & 0.9525 \\
\bottomrule
\end{tabular}

\begin{flushleft}
\footnotesize \textit{Notes:} (1) $W_{\text{true}}$ denotes the true welfare functional, computed numerically using $5{,}000$ Sobol points drawn from $F$.  
(2) Bias is the average deviation from $W_{\text{true}}$.  
(3) SD is the sampling standard deviation.  
(4) SE is the average estimated asymptotic standard error of $\hat{\overline{W}}(\hat{\mu})$.  
(5) SD(SE) is the standard deviation of SE across iterations.  
(6) Coverage is the empirical coverage probability of the 95\% CI.  
\end{flushleft}
\end{table}

\begin{table}[!htbp]
\centering
\caption{Theorem 1 Simulation Results for Models 4--7 (Sieve Variance)}
\label{tab:Sim_Models4to7}
\begin{tabular}{lccccccc}
\toprule
Model & $n$ & $W_{\text{true}}$ & Bias & SD & SE & SD(SE) & Coverage \\
\midrule
M4 & 1500 & 0.1033 & 0.0185 & 0.0478 & 0.0482 & 0.0088 & 0.9400 \\
   & 3000 & 0.1033 & 0.0088 & 0.0348 & 0.0350 & 0.0048 & 0.9445 \\
   & 6000 & 0.1033 & 0.0043 & 0.0248 & 0.0251 & 0.0024 & 0.9490 \\
\midrule
M5 & 1500 & 0.0499 & 0.0282 & 0.0418 & 0.0431 & 0.0110 & 0.9310 \\
   & 3000 & 0.0499 & 0.0158 & 0.0303 & 0.0311 & 0.0068 & 0.9335 \\
   & 6000 & 0.0499 & 0.0094 & 0.0227 & 0.0224 & 0.0038 & 0.9290 \\
\midrule
M6 & 1500 & 0.2315 & 0.0268 & 0.0559 & 0.0552 & 0.0057 & 0.9260 \\
   & 3000 & 0.2315 & 0.0123 & 0.0394 & 0.0402 & 0.0030 & 0.9480 \\
   & 6000 & 0.2315 & 0.0050 & 0.0288 & 0.0288 & 0.0015 & 0.9490 \\
\midrule
M7 & 1500 & 0.1250 & 0.0462 & 0.0505 & 0.0502 & 0.0087 & 0.8920 \\
   & 3000 & 0.1250 & 0.0218 & 0.0346 & 0.0350 & 0.0046 & 0.9270 \\
   & 6000 & 0.1250 & 0.0101 & 0.0245 & 0.0245 & 0.0023 & 0.9400 \\
\bottomrule
\end{tabular}

\begin{flushleft}
\footnotesize \textit{Notes:} (1) $W_{\text{true}}$ denotes the true welfare functional, computed numerically using $5{,}000$ Sobol points drawn from $F$.  
(2) Bias is the average deviation from $W_{\text{true}}$.  
(3) SD is the sampling standard deviation.  
(4) SE is the average estimated asymptotic standard error of $\hat{\overline{W}}(\hat{\mu})$.  
(5) SD(SE) is the standard deviation of SE across iterations.  
(6) Coverage is the empirical coverage probability of the 95\% CI.  
\end{flushleft}

\end{table}

\begin{table}[!htbp]
\centering
\caption{Theorem 2 Simulation Results for Models 8--10 (Sieve Variance)}
\label{tab:Sim_Models1to3}
\begin{tabular}{lccccccc}
\toprule
Model & $n$ & $W_{\text{true}}$ & Bias & SD & SE & SD(SE) & Coverage \\
\midrule
M8 & 1500 & 0.3857 & 0.0068 & 0.0414 & 0.0417 & 0.0026 & 0.9475 \\
   & 3000 & 0.3857 & 0.0029 & 0.0296 & 0.0294 & 0.0014 & 0.9505 \\
   & 6000 & 0.3857 & 0.0006 & 0.0206 & 0.0208 & 0.0007 & 0.9555 \\
\midrule
M9 & 1500 & 0.2358 & 0.0042 & 0.0425 & 0.0431 & 0.0026 & 0.9555 \\
   & 3000 & 0.2358 & 0.0029 & 0.0304 & 0.0305 & 0.0012 & 0.9475 \\
   & 6000 & 0.2358 & 0.0012 & 0.0209 & 0.0216 & 0.0006 & 0.9510 \\
\midrule
M10 & 1500 & 0.5001 & 0.0067 & 0.0511 & 0.0511 & 0.0020 & 0.9480 \\
   & 3000 & 0.5001 & 0.0037 & 0.0357 & 0.0364 & 0.0011 & 0.9590 \\
   & 6000 & 0.5001 & 0.0013 & 0.0253 & 0.0259 & 0.0006 & 0.9580 \\
\bottomrule
\end{tabular}

\begin{flushleft}
\footnotesize \textit{Notes:} (1) $W_{\text{true}}$ denotes the true welfare functional, computed numerically using $5{,}000$ Sobol points drawn from $F$.  
(2) Bias is the average deviation from $W_{\text{true}}$.  
(3) SD is the sampling standard deviation.  
(4) SE is the average estimated asymptotic standard error of $\hat{\overline{W}}(\hat{\mu})$.  
(5) SD(SE) is the standard deviation of SE across iterations.  
(6) Coverage is the empirical coverage probability of the 95\% CI.  
\end{flushleft}
\end{table}

\begin{table}[!htbp]
\centering
\caption{Theorem 2 Simulation Results for Models 11--14 (Sieve Variance)}
\label{tab:Sim_Models4to7_New}
\begin{tabular}{lccccccc}
\toprule
Model & $n$ & $W_{\text{true}}$ & Bias & SD & SE & SD(SE) & Coverage \\
\midrule
M11 & 1500 & 0.1033 & 0.0307 & 0.0365 & 0.0379 & 0.0041 & 0.9065 \\
    & 3000 & 0.1033 & 0.0156 & 0.0264 & 0.0272 & 0.0023 & 0.9290 \\
    & 6000 & 0.1033 & 0.0084 & 0.0192 & 0.0195 & 0.0013 & 0.9395 \\
\midrule
M12 & 1500 & 0.0499 & 0.0418 & 0.0316 & 0.0344 & 0.0050 & 0.8410 \\
    & 3000 & 0.0499 & 0.0232 & 0.0229 & 0.0245 & 0.0031 & 0.8930 \\
    & 6000 & 0.0499 & 0.0126 & 0.0168 & 0.0175 & 0.0018 & 0.9205 \\
\midrule
M13 & 1500 & 0.2315 & 0.0177 & 0.0432 & 0.0438 & 0.0030 & 0.9420 \\
    & 3000 & 0.2315 & 0.0088 & 0.0309 & 0.0313 & 0.0016 & 0.9440 \\
    & 6000 & 0.2315 & 0.0041 & 0.0221 & 0.0222 & 0.0008 & 0.9495 \\
\midrule
M14 & 1500 & 0.1250 & 0.0251 & 0.0382 & 0.0386 & 0.0050 & 0.9230 \\
    & 3000 & 0.1250 & 0.0117 & 0.0258 & 0.0268 & 0.0024 & 0.9440 \\
    & 6000 & 0.1250 & 0.0054 & 0.0182 & 0.0188 & 0.0012 & 0.9415 \\
\bottomrule
\end{tabular}

\begin{flushleft}
\footnotesize \textit{Notes:} (1) $W_{\text{true}}$ denotes the true welfare functional, computed numerically using $5{,}000$ Sobol points drawn from $F$.  
(2) Bias is the average deviation from $W_{\text{true}}$.  
(3) SD is the sampling standard deviation.  
(4) SE is the average estimated asymptotic standard error of $\hat{\overline{W}}(\hat{\mu})$.  
(5) SD(SE) is the standard deviation of SE across iterations.  
(6) Coverage is the empirical coverage probability of the 95\% CI.  
\end{flushleft}
\end{table}

\begin{table}[htbp]
\centering
\caption{Estimated Welfare Gains and Share of Population to be Treated Under Nonparametric Plug-in Rule (Sieve Variance)}
\label{tab:emp_compare}
\begin{subtable}{\textwidth}
\centering
\caption{30-Month Post-Program Earnings, No Treatment Cost}
\begin{tabular}{@{}p{6cm}cc@{}}
\toprule
\textbf{Method} & \textbf{Share Treated} & \textbf{Est. Welfare Gain} \\
\midrule
Ours         & \makecell{0.89\\ \scriptsize{(0.73 1.05)}} & \makecell{\$1,519  \\ \scriptsize{(\$691, \$2347)}}
\\
\midrule
KT (2018) & \makecell{0.91 \\ \scriptsize{NA}} & \makecell{\$1,693  \\ \scriptsize{NA}} \\
\bottomrule
\end{tabular}
\end{subtable}

\vspace{1em}

\begin{subtable}{\textwidth}
\centering
\caption{30-Month Post-Program Earnings, \$774 Cost per Treatment}
\begin{tabular}{@{}p{6cm}cc@{}}
\toprule
\textbf{Method} & \textbf{Share Treated} & \textbf{Est. Welfare Gain} \\
\midrule
Ours         & \makecell{0.80\\ \scriptsize{(0.53, 1.07)}} & \makecell{\$858  \\ \scriptsize{(\$84, \$1631)}}
\\
\midrule
KT (2018) & \makecell{0.78 \\ \scriptsize{NA}} & \makecell{\$996 \\ \scriptsize{NA}} \\
\bottomrule
\end{tabular}
\end{subtable}

\vspace{1em}
\begin{flushleft}
\small \textit{Note:} Two-sided 95\% confidence intervals in parentheses, constructed based on the asymptotic distributions of the corresponding estimators.
\end{flushleft}
\end{table}

\section{Additional Results for Empirical Application} \label{Appendix_Emp}

\subsection{Sensitivity Analysis for Tuning Parameters} \label{Appendix_Emp_Tuning}

When conducting inference for the value functional in the empirical analysis of the JTPA data, two tuning parameters must be specified: $\iota$ and $s$. The parameter $\iota$ determines the size of the set $\{x \in \chi : -\epsilon < \hat{h}(x) < \epsilon\}$. The parameter $s$ controls the smoothness of the estimated density function over the trimmed dataset by scaling the bandwidths used in kernel density estimation.

To evaluate robustness, we perform sensitivity analyses with respect to both tuning parameters. The results, reported in Table~\ref{tab:emp_iota_s_sensitivity}, show that the estimated value functional $\hat{\overline{V}}(\hat{\mu})$ is largely insensitive to the choice of $\iota$, but more responsive to the choice of $s$. This shows the importance of choosing an appropriate kernel density estimator.

\begin{table}[!htbp]
\centering
\caption{Sensitivity of $\hat{\overline{V}}(\hat{\mu})$ to Tuning Parameters $\iota$ and $s$ (Cost = F vs.\ Cost = T)}
\label{tab:emp_iota_s_sensitivity}
\begin{subtable}{\textwidth}
\centering
\caption{Sensitivity to $\iota$}
\begin{tabular}{ccccccc}
\toprule
 & $\iota$ & $\hat{V}$ & SE & CI Low & CI High & Num \\
\midrule
\multicolumn{7}{l}{\textbf{Cost = F}} \\
\midrule
 & 0.005  & 0.8908 & 0.0866 & 0.7210 & 1.0606 & 475 \\
 & 0.0075 & 0.8908 & 0.0862 & 0.7219 & 1.0597 & 713 \\
 & 0.0100 & 0.8908 & 0.0828 & 0.7286 & 1.0530 & 931 \\
 & 0.0125 & 0.8908 & 0.0840 & 0.7262 & 1.0554 & 1203 \\
 & 0.0150 & 0.8908 & 0.0857 & 0.7228 & 1.0588 & 1463 \\
\midrule
\multicolumn{7}{l}{\textbf{Cost = T}} \\
\midrule
 & 0.005  & 0.7971 & 0.1343 & 0.5338 & 1.0603 & 689 \\
 & 0.0075 & 0.7971 & 0.1303 & 0.5417 & 1.0525 & 1044 \\
 & 0.0100 & 0.7971 & 0.1373 & 0.5279 & 1.0662 & 1411 \\
 & 0.0125 & 0.7971 & 0.1399 & 0.5229 & 1.0712 & 1784 \\
 & 0.0150 & 0.7971 & 0.1406 & 0.5214 & 1.0727 & 2152 \\
\bottomrule
\end{tabular}
\end{subtable}

\vspace{1em}

\begin{subtable}{\textwidth}
\centering
\caption{Sensitivity to $s$}
\begin{tabular}{ccccccc}
\toprule
 & $s$ & $\hat{V}$ & SE & CI Low & CI High & Num \\
\midrule
\multicolumn{7}{l}{\textbf{Cost = F}} \\
\midrule
 & 1 & 0.8908 & 0.0896 & 0.7151 & 1.0665 & 931 \\
 & 2 & 0.8908 & 0.0860 & 0.7224 & 1.0593 & 931 \\
 & 3 & 0.8908 & 0.0828 & 0.7286 & 1.0530 & 931 \\
 & 4 & 0.8908 & 0.0809 & 0.7322 & 1.0494 & 931 \\
 & 5 & 0.8908 & 0.0793 & 0.7354 & 1.0462 & 931 \\
\midrule
\multicolumn{7}{l}{\textbf{Cost = T}} \\
\midrule
 & 1 & 0.7971 & 0.1103 & 0.5809 & 1.0132 & 1411 \\
 & 2 & 0.7971 & 0.1254 & 0.5513 & 1.0428 & 1411 \\
 & 3 & 0.7971 & 0.1373 & 0.5279 & 1.0662 & 1411 \\
 & 4 & 0.7971 & 0.1420 & 0.5187 & 1.0754 & 1411 \\
 & 5 & 0.7971 & 0.1554 & 0.4925 & 1.1016 & 1411 \\
\bottomrule
\end{tabular}
\end{subtable}

\begin{flushleft}
\small \textit{Notes:} (a) Cost = T if the outcome variable is the 30-month earnings minus an additional \$774 and F otherwise.  
(b) SE = $\hat{\sigma}_V/\sqrt{n}$.  
(c) $\epsilon = \iota \times \text{SD}(\hat{h}(x))$ over \textbf{X\_trim}.  
(d) $\text{CI Low} = \hat{\overline{V}}(\hat{\mu}) - 1.96 \times \text{SE}$, $\text{CI High} = \hat{\overline{V}}(\hat{\mu}) + 1.96 \times \text{SE}$.  
(e) Num is the number of Sobol points whose $\hat{h}$ evaluations fall into $\{x \in \textbf{X\_trim} : |\hat{h}(x)| < \epsilon\}$.  
\end{flushleft}
\end{table}

\subsection{Sensitivity Analysis for Density Estimation} \label{Appendix_Emp_Density}

We employed a naive Gaussian kernel density estimator for the covariate distribution in the trimmed dataset, treating years of education as a continuous variable. While years of education  is theoretically continuous, in practice it takes on only discrete values in the dataset. This motivates considering alternative density estimators that explicitly treat years of education as categorical in order to assess the robustness of our inference. To this end, we partitioned the trimmed dataset by educational level and examined three cases:
(1) a naive Gaussian kernel density estimator with bandwidths selected by \texttt{Hscv()} and scaled by a small factor to ensure smoothness;
(2) a logspline density estimator (\cite*{StoneEtAl1997});
(3) a penalized B-spline density estimator (\cite*{SchellhaseKauermann2012}).

The results are presented in Table~\ref{tab:emp_density_sensitivity}. As shown, when costs are taken into account, the Gaussian kernel density estimator discussed in the main text yields relatively conservative confidence intervals compared to the three alternatives. In contrast, when costs are not considered, the results remain largely unchanged across all the estimators. However, we should emphasize that these three alternative density estimators are applied based on the assumption that years of education is categorical, but in theory it is treated as a continuous variable, which provides justification for using the Gaussian kernel estimator in the main text.

\begin{table}[!htbp]
\centering
\caption{Sensitivity of $\hat{\overline{V}}(\hat{\mu})$ to Alternative Density Estimators}
\label{tab:emp_density_sensitivity}
\small
\begin{tabular}{lcccccccc}
\toprule
 & \multicolumn{4}{c}{\textbf{Cost = T}} & \multicolumn{4}{c}{\textbf{Cost = F}} \\
\cmidrule(lr){2-5} \cmidrule(lr){6-9}
Estimator & $\hat{V}$ & SE & CI Low & CI High & $\hat{V}$ & SE & CI Low & CI High \\
\midrule
Logspline & 0.7971 & 0.0932 & 0.6144 & 0.9797 & 0.8908 & 0.1041 & 0.6868 & 1.0948 \\
Kernel    & 0.7971 & 0.0704 & 0.6590 & 0.9351 & 0.8908 & 0.0980 & 0.6988 & 1.0829 \\
Penalized B-spline & 0.7971 & 0.0601 & 0.6793 & 0.9148 & 0.8908 & 0.0691 & 0.7555 & 1.0262 \\
\bottomrule
\end{tabular}
\begin{flushleft}
\footnotesize \textit{Notes:} (a) Cost = T if the outcome variable is 30-month earnings minus an additional \$774, and Cost = F otherwise.  
(b) SE = $\hat{\sigma}_V/\sqrt{n}$.  
(c) $\text{CI Low} = \hat{\overline{V}}(\hat{\mu}) - 1.96 \times \text{SE}$, $\text{CI High} = \hat{\overline{V}}(\hat{\mu}) + 1.96 \times \text{SE}$.
\end{flushleft}
\end{table}

\subsection{Extrapolating Sieve Estimates}
\label{Appendix_Emp_Extrapolate}

We also examine the case where the dataset is not trimmed. In this setting, interpolation is required to evaluate the estimated functions $\hat{\mu}(x,0)$ and $\hat{\mu}(x,1)$ outside the supports of the control and treated groups, respectively. The formulas for the welfare and value functional estimators, as well as their asymptotic variances, remain unchanged. The only difference is that the sample averages are now computed over the full covariate support rather than the trimmed dataset. The results for this specification are reported in Table~\ref{tab:emp_extrapolate}.

\begin{table}[!htbp]
\centering
\caption{Estimated Welfare Gains and Share of Population to be Treated Under Nonparametric Plug-in Rule Without Trimming}
\label{tab:emp_extrapolate}
\begin{subtable}{\textwidth}
\centering
\caption{30-Month Post-Program Earnings, No Treatment Cost}
\begin{tabular}{@{}p{6cm}cc@{}}
\toprule
\textbf{Method} & \textbf{Share Treated} & \textbf{Est. Welfare Gain} \\
\midrule
Ours         & \makecell{0.92\\ \scriptsize{(0.80 1.03)}} & \makecell{\$1,459  \\ \scriptsize{(\$825, \$2093)}}
\\
\midrule
KT (2018) & \makecell{0.91 \\ \scriptsize{NA}} & \makecell{\$1,693  \\ \scriptsize{NA}} \\
\bottomrule
\end{tabular}
\end{subtable}

\vspace{1em}

\begin{subtable}{\textwidth}
\centering
\caption{30-Month Post-Program Earnings, \$774 Cost per Treatment}
\begin{tabular}{@{}p{6cm}cc@{}}
\toprule
\textbf{Method} & \textbf{Share Treated} & \textbf{Est. Welfare Gain} \\
\midrule
Ours         & \makecell{0.85\\ \scriptsize{(0.71, 0.99)}} & \makecell{\$768  \\ \scriptsize{(\$164, \$1373)}}
\\
\midrule
KT (2018) & \makecell{0.78 \\ \scriptsize{NA}} & \makecell{\$996 \\ \scriptsize{NA}} \\
\bottomrule
\end{tabular}
\end{subtable}

\vspace{1em}
\begin{flushleft}
\small \textit{Note:} Two-sided 95\% confidence intervals in parentheses, constructed based on the asymptotic distributions of the corresponding estimators.
\end{flushleft}
\end{table}

\subsection{Estimating $\hat{\Omega}$ Using the Trimmed Dataset} \label{Appendix_Emp_Omega}

We used the full sample to estimate the asymptotic variance-covariance matrix $\hat{\Omega}$. For comparison, Table~\ref{tab:emp_trim_for_omega} reports the results obtained when $\hat{\Omega}$ is instead estimated using only the trimmed dataset. We observe that the CIs expands by a small factor.

\begin{table}[htbp]
\centering
\caption{Estimated Share of Population to be Treated Under Nonparametric Plug-in Rule With $\hat{\O}$ Estimated Using Trimmed Data}
\label{tab:emp_trim_for_omega}
\begin{subtable}{\textwidth}
\centering
\caption{30-Month Post-Program Earnings, No Treatment Cost}
\begin{tabular}{@{}p{6cm}c@{}}
\toprule
\textbf{Method} & \textbf{Share Treated} \\
\midrule
Ours         & \makecell{0.89\\ \scriptsize{(0.72, 1.06)}} \\
\midrule
KT (2018) & \makecell{0.91 \\ \scriptsize{NA}} \\
\bottomrule
\end{tabular}
\end{subtable}

\vspace{1em}

\begin{subtable}{\textwidth}
\centering
\caption{30-Month Post-Program Earnings, \$774 Cost per Treatment}
\begin{tabular}{@{}p{6cm}c@{}}
\toprule
\textbf{Method} & \textbf{Share Treated} \\
\midrule
Ours         & \makecell{0.80\\ \scriptsize{(0.44, 1.15)}} \\
\midrule
KT (2018) & \makecell{0.78 \\ \scriptsize{NA}} \\
\bottomrule
\end{tabular}
\end{subtable}
\end{table}

\subsection{The Sieve Score Bootstrapping Procedure for Critical Values}
\label{Appendix_Emp_CriticalVal}

Following \cite*{chen2018optimal}, one could also use the sieve score bootstrap procedure to calculate the critical value used in the construction of the 95\% CI for $\overline{V}(\mu_0)$. The algorithm starts with making iid draws $\{\omega_i\}_{i=1}^n$ from a distribution independent of the data \textbf{df} with mean zero, unit variance, and finite third moment (e.g. a standard normal distribution), and then calculates the the bootstrap sieve t-statistic using the formula
\[
Z_n^* = \frac{DV(\hat{\mu})[\nu]' (B'B/n)}{\hat{\sigma}_V/\sqrt{n}}
\begin{bmatrix}
    \frac{1}{\sqrt{n}}\sum_{i = 1}^{n_1} \psi^{K_1}(x_i) \hat{u}_i \omega_i \\

    \frac{1}{\sqrt{n}}\sum_{j = n_1 + 1}^n \psi^{K_0}(x_j) \hat{u}_j \omega_j\textbf{}        
\end{bmatrix}
\]
with $B$ being the design matrix of regression $Y_i$ on $D_i \psi^{K_1}(X_i)$ and $(1-D_i )\psi^{K_0}(X_i)$.  The 95\% quantile of the bootstrapped $|Z_n^*|$ could be used as the critical value in the construction of the CI. Setting the number of bootstrap equals 1000, the resulting critical values is 1.865758, not significantly different from 1.96.

\end{document}